\newcommand{\defref}[1]{definition \ref{#1}}
\newtheorem{lemma}{Lemma}
\newtheorem{theorem}{Theorem}
\newtheorem{definition}{Definition}
\newtheorem{corollary}{Corollary}
\def\beq{\begin{equation}}
\def\eeq{\end{equation}}
\def\bq{\begin{equation*}}
\def\eq{\end{equation*}}
\def\N{\mathbb{N}}
\def\R{\mathbb{R}}
\def\C{\mathbb{C}}
\def\dif{\mathrm{d}}
\newcommand{\bessel}[2]{\mathrm{J}_{#1}\!\left(#2\right)}
\newcommand{\mbessel}[2]{\mathrm{I}_{#1}\!\left(#2\right)}
\newcommand{\mbesselt}[2]{\mathrm{K}_{#1}\!\left(#2\right)}
\newcommand{\laguerre}[3]{\mathrm{L}_{#1}^{#2}\!\left(#3\right)}
\def\hilbert{\mathcal{H}}
\newcommand{\cfunction}[2][\infty]{\mathrm{C}^{#1}(#2)}
\newcommand{\ket}[1]{|#1\rangle}
\newcommand{\bra}[1]{\langle #1|}
\newcommand{\scal}[1]{\langle #1\rangle}
\def\config{\mathcal{C}}
\def\phasespace{\mathcal{M}}
\def\observ{\mathcal{O}}
\def\hamVF{\chi}
\def\Lie{\mathcal{L}}
\newcommand{\cotb}[1]{\mathrm{T}^{*}#1}
\newcommand{\compl}[1]{#1^{\C}}
\def\su{\mathfrak{su}}
\renewcommand{\exp}[1]{\operatorname{exp}\left(#1\right)}
\newcommand{\ee}[1]{\operatorname{e}^{#1}}
\renewcommand{\Re}{\mathrm{Re}}
\renewcommand{\Im}{\mathrm{Im}}
\def\phys{\text{phys}}
\def\m{\mathfrak{m}} 
\def\w{\mathrm{w}}
	\def\SFreview{\cite{Baez:1999sr,Perez:2012wv,Perez:2003vx}}
	\def\LQCreview{\cite{Ashtekar:2008zu,Bojowald:2011ku, Ashtekar:2003hd, Ashtekar:2011ni}}
	\def\arnold{\cite{arnold}}
	\def\CCS{\cite{Thiemann:2002vj, Sahlmann:2001nv,Thiemann:2000bw, Thiemann:2000ca,Thiemann:2000bx}}
 \date{\normalsize \today} 
\begin{document}

\author{Antonia Zipfel}
\email{antonia.zipfel@fuw.edu.pl}
\affiliation{Instytut Fizyki Teoretycznej, Uniwersytet Warszawski, Pasteura 5, 02-093 Warszawa, Poland, EU}

\author{Thomas Thiemann}
\email{thomas.thiemann@gravity.fau.de}
\affiliation{Universit\"at Erlangen, Institut f\"ur Quantengravitation, Staudtstrasse 7, D-91058 Erlangen, EU}

\title{\bf Stable coherent states}

\begin{abstract}
\begin{center}
{\bf Abstract}
We analyze the stability under time evolution of complexifier coherent states (CCS) in one-dimensional mechanical systems. A system of coherent states is called stable if it evolves into another coherent state. It turns out that a system can only poses stable CCS if the classical evolution of the variable $\mathbf{z}=\ee{-i\Lie_{\hamVF_C}} \mathbf{q}$ for a given complexifier $C$ depends only on $\mathbf{z}$ itself and not on its complex conjugate. This condition is very restrictive in general so that only few systems exist that obey this condition. However, it is possible to access a wider class of models that in principle may allow for stable coherent states associated to certain regions in the phase space by introducing action-angle coordinates.
\end{center}
\end{abstract}
\maketitle

\section{Introduction}

Coherent states have proven to be a powerful tool in many areas of physics as well as mathematics. The name `coherent' goes back to Glauber \cite{Glauber1963_1,Glauber1963_2} who rediscovered Schr\"odinger's states in the context of quantum optics. They are also used, for example, in geometric quantization \cite{Woodhouse:gq1991}, harmonic analysis and representation theory \cite{BargmannIntTrafo,Segal:mp1960, Hall:ha2000}. This broad applicability entailed a vast number of generalizations, just to mention some \cite{PerelomovGCS,Hall:sb1994, KlauderSkag, Klauder:2001ra}.

In quantum gravity coherent states are employed to derive a semiclassical limit of the model in question. Especially in the absence of experimental data, this can provide important insights on quantization ambiguities and possible inconsistencies. In canonical loop quantum gravity (LQG) one uses, for example, so-called complexifier coherent states {\CCS}, going back to the pioneering work of Hall \cite{Hall:sb1994}, in order to define such a limit. For constraint systems such as gravity one has to decide on which space, the kinematical or the physical Hilbert space, the states shall be defined. Which strategy is chosen depends, of course, on the problem in question but in many cases it is easier to build coherent states on the kinematical rather than the physical Hilbert space. This truly applies to LQG where physical states are only known formally and brings in a new aspect that has to be respected as the states designed on the kinematical Hilbert space should not lose their properties when solving the constraints.

From an heuristic point of view the implementation of a constraint $\hat{H}$ is related to a sort of time-evolution generated by $\hat{H}$ since 
\beq
\label{eqn:stab1}
\psi_{\phys} `=' \delta(\hat{H})\,\psi `=' \int\dif t\, \ee{it\hat{H}} \psi
\eeq
gives a formal solution. In fact, many strategies such as group averaging and rigging map procedures (see e.g. \cite{ThiemannBook}) take this as a starting point. Evidently, this ansatz is also advocating itself in order to solve the Hamiltonian constraint of LQG and is the initial idea from which spin foam {\SFreview} models arose. So instead of asking "Is a coherent state maintaining its coherence when solving the constraints?" it is tempting to simplify matters and ask: "Is the coherent state $\psi_z$ stable under the evolution generated by $\hat{U}(t):=\ee{it\hat{H}/\hbar}\,$?" or likewise "Is $\hat{U}(t)\psi_z$ still coherent?" These questions are as well of interest in quantum mechanics because mostly one is not only interested in the semiclassical behavior at a certain time but in the dynamical evolution.

In this work, the necessary conditions for the existence of stable complexifier coherent states are investigated. It is found that in general it is very hard to construct a complexifier adapted to the dynamics of a given model. Nevertheless, the derived criteria are form invariant under canonical transformation which opens the possibility to excess a wider class of models, namely, those that show a quasi-periodic motion.

In the following section, the semiclassical properties of coherent states (section \ref{sec:Semiclassical}) and the construction principle of complexifier coherent states (section \ref{sec:CCS}) are reviewed based on \cite{ThiemannBook}. Thereafter, a stability criterium for finite dimensional models will be derived and in section \ref{sec:first_solution} we will discuss a simplified ansatz to find solutions to this condition. Section \ref{no-go} contains a proof that it is, in fact, not possible to use this simplified ansatz to determine other systems than the harmonic and radial oscillator that posses stable complexifier states. A generalized construction principle using so-called action-angle coordinates and the Hamilton-Jacobi approach is given in section \ref{sec:Generalized_construction} and some examples are analyzed in section \ref{sec:examples}. The paper closes with a short discussion of the results.

\section{Coherent states}

\subsection{Preliminaries and conventions}
\label{sec:partIconv}

If not stated otherwise it will be assumed that the phase space $\mathcal{M}$ of a given system with finite number of degrees of freedom $\mathfrak{f}$ is the cotangent bundle $\cotb{\config}$ of the configuration space $\config$. The Hamiltonian vector field $\hamVF_f$ of a continuous differentiable function $f$ on $\phasespace$ is the vector field that satisfies the condition $0\equiv\Lie_{\hamVF_f} \Omega$ where $\Omega$ is the symplectic 2-form on $\phasespace$ and $\Lie_{\hamVF_f}$ the Lie-derivative along $\hamVF_f$. The Poisson bracket corresponding to $\Omega$ is given by
\bq
\{f,g\}:=\Omega(\hamVF_f, \hamVF_{g})=\hamVF_f[g]
\eq
and multiple Poisson brackets are defined through the recursion relation $\{f,g\}_{(n+1)}:=\{f,\{f,g\}_{(n)}\}$ with $\{f,g\}_{(0)}:=g$. The Liouville measure is the measure on $\mathcal{M}$ which is invariant under the action of the symplectic group that preserves $\Omega$.  

Throughout the rest of this paper, $\mathbf{x}$ will denote the $\mathfrak{f}$-tuple $(x_1,\cdots, x_{\mathfrak{f}})$, $\mathbf{x}\cdot\mathbf{y}=\sum_{j=1}^{\mathfrak{f}} x_j\,y_j$ the usual Euclidean scalar product and $(\mathbf{p},\mathbf{q})$ a canonical conjugated pair, that is, they satisfy 
\bq
\{p_j,q_k\}=\delta_{jk}
\quad\text{and}\quad
\{p_j,p_k\}=0=\{q_j,q_k\}~.
\eq
Furthermore, $\mathcal{O}$ will denote a sub-algebra of the Poisson-algebra $\cfunction{\config}$ that separates the points of $\phasespace$ and $\overline{z}$ the complex conjugate. 

Under quantization we understand a map $(\phasespace, \{\cdot,\cdot\},\mathcal{O})\to (\hilbert, \frac{1}{i\hbar}[\cdot,\cdot],\widehat{\mathcal{O}})$ where $\hilbert$ is a Hilbert space and $\widehat{\mathcal{O}}$ is a sub-algebra of the algebra of linear operators $\mathcal{L}(\hilbert)$ on $\hilbert$ that is a representation of $\mathcal{O}$. If not said otherwise the Hilbert space $\hilbert$ is the space of square integrable functions $L^2(\overline{\config},\dif \mu)$ on a suitable extension $\overline{\config}$ of the configuration space with measure $\dif\mu$. The scalar product on $\hilbert$ is usually given by
\bq
\scal{f,g}=\int\dif\mu(x)\, \overline{f(x)}\,g(x)~.
\eq

\subsection{Semiclassical and coherent states}
\label{sec:Semiclassical}
In his lecture "\"Uber die Spektraltheorie der Elemente" \cite{Bohr1920}, given in 1920 at a meeting of the german physical society in Berlin, N. Bohr introduced the principle that the behavior of a quantum system should mimic the classical one for high energies. To formulate this statement in a more precise manner it is useful to introduce the notion of \emph{semiclassical states}. This are elements $ \psi_{\m}$ in the Hilbert space $\hilbert$ that are associated to points $\m$ in $\phasespace$. They are constructed in such a way that the expectation value of a quantum observable $\hat{O}$ in a given subalgebra $\widehat{\observ}\subset\mathcal{L}(\hilbert)$ that separates the points of $\mathcal{M}$ is close to the classical value $O(\m)$ of the corresponding phase space function $O$. Stated differently, the following properties have to hold for all generic points $\m\in\phasespace$ (i.e. points for which the denominator in \eqref{eqn:sc1},\eqref{eqn:sc2} and \eqref{eqn:sc3} is non zero):
\begin{itemize}
\item\textit{Expectation value property}
\beq
\label{eqn:sc1}
\left|\frac{\scal{\psi_{\m},\hat{O}\,\psi_{\m}}}{O(\m)}-1\right|\ll 1
\eeq
\item\textit{Ehrenfest property}
\beq
\label{eqn:sc2}
\left|\frac{\scal{\psi_{\m},[\hat{O},\hat{O}']\,\psi_{\m}}}{i\hbar\,\{O,O'\}}-1\right|\ll1
\eeq
\item\textit{Fluctuation property}
\beq
\label{eqn:sc3}
\left|\frac{\scal{\psi_{\m},\hat{O}^2\,\psi_{\m}}}{\scal{\psi_{\m},\hat{O}\,\psi_{\m}}^2}-1\right|\ll 1
\eeq
\end{itemize}
For most systems it is not possible to design semiclassical states for \emph{all} observables simultaneously but it highly depends on the chosen subalgebra. A good example are the coherent states of the harmonic oscillator introduced by Schr\"odinger in 1926 \cite{Schroedinger1926} which have `good' semiclassical properties for the linear span of the annihilator $\hat{a}$, the creator $\hat{a}^{\dagger}$ and $\mathbbm{1}$. In addition, these states have several other desirable properties which motivates the following definition.
\begin{definition}[Coherent states]
\label{def:CS}
A system of states $\{\psi_{\m}\}_{\m\in\phasespace}\subset\hilbert$ is said to be coherent provided that in addition to \eqref{eqn:sc1},\eqref{eqn:sc2} and \eqref{eqn:sc3} the states also obey:
\begin{itemize}
\item {\bf Overcompleteness (Resolution of identity):}
\beq
\label{eqn:overcomplete}
\mathbbm{1}_{\hilbert}=\int_{\phasespace} \dif\nu(\m)\, \ket{\psi_{\m}}\bra{\psi_{\m}}
\eeq
for some measure $\nu$ on $\phasespace$.

\item {\bf Annihilation operator property:} There exist operators $\hat{z}$ such that $\hat{z}\,\psi_{\m}=z(\m)\,\psi_{\m}$~.

\item {\bf Minimal uncertainty:} For the self-adjoint operators $\hat{x}=(\hat{z}+\hat{z}^{\dagger})/2$ and $\hat{y}=(\hat{z}-\hat{z}^{\dagger})/(2i)$ the Heisenberg uncertainty relation is saturated, i.e.
\beq
\label{eqn:minUncert}
\scal{(\hat{x}-\scal{\hat{x}^2}_{\m})^2}_{\m}=\scal{(\hat{y}-\scal{\hat{y}^2}_{\m})^2}_{\m}
=\frac{\hbar}{2}|\scal{[\hat{x},\hat{y}]}_{\m}|
\eeq
where $\scal{\cdot}_{\m}:=\scal{\psi_{\m},\cdot\,\psi_{\m}}$.

\item{\bf Peakedness property:}
For any $\m\in\phasespace$, the overlap function
\beq
\m'\mapsto |\scal{\psi_{\m},\psi_{\m'}}|^2
\eeq
is concentrated in a phase space cell of Liouville volume $\frac{1}{2}|\scal{[\hat{p},\hat{q}]}_{\m}|$.
\end{itemize}
\end{definition}
As stated in the introduction, coherent states have a broad application in many areas of physics and mathematics, which entailed a vast number of generalizations, so that by now `coherent state'  is not a clear-cut expression in the literature. In the subsequent section the generalization suggested in {\CCS} is reviewed, which in most cases preserves the properties mentioned in \defref{def:CS}.

\subsection{The complexifier method}
\label{sec:CCS}
The central point of semiclassical/coherent states is that they are \emph{continuously labeled by points in the classical phase space} $\phasespace$. In the case of the harmonic oscillator, this is achieved by constructing the eigenstates of the operator $\hat{a}$ corresponding to the \emph{complex parametrization} 
\beq
\label{HO}
a:=\sqrt{\frac{m\omega}{2}}(q-\frac{i}{m\omega}p)~.
\eeq
It is, of course, just one possible choice out of many different parametrizations of $\phasespace$ and therefore many other state systems are imaginable that are coherent in the sense of definition \ref{def:CS}. This is the starting point of the complexifier method introduced in {\CCS} that enables to directly relate a complex parametrization of $\mathcal{M}$ with a state system in $\hilbert$. 

Let $C:\phasespace\to\R$ be a positive definite function with the dimension of an action that is smooth with respect to the Liouville measure on $\phasespace$, has a nowhere vanishing Hamiltonian vector field $\chi_C$ and grows stronger than linearly in $\mathbf{p}$ for each fixed point $\mathbf{q}\in\config$. A function satisfying these properties is called a \emph{complexifier} since 
\beq
\label{eqn:compl1}
q_i\mapsto z_i:=\ee{-i\Lie_{\hamVF_C}} q_i
\eeq
yields a complex coordinate system on $\phasespace$ for a given parametrization $\{q_i| i=1,\cdots\mathfrak{f}\}$ of the configuration space $\mathcal{C}$. The smoothness of $C$ and the fact that $\hamVF_C$ is nowhere vanishing guarantee that  $C$ induces a non-degenerate, smooth transformation. If $\mathcal{M}=\cotb{\config}$ then $\ee{-i\Lie_{\hamVF_C}}q_i$ defines a symplectomorphism\footnote{This is no longer true for $\phasespace\neq\cotb{\config}$ but $z_i$ still provide good local coordinates due to Darboux's theorem.} that maps a point $\m\in\phasespace$ to a point $z(\m)$ in the complex extension $\compl{\config}$ of the configuration space.
For example, the complexifier $C=\frac{p^2}{2}$ generates the symplectomorphism 
\bq
\phasespace=\R^2\to \C~,\quad (q,p)\mapsto a= q-i\,p~.
\eq
According to Bargman and Segal \cite{Segal:mp1960,BargmannIntTrafo}, the associated coherent states $\psi_{a}$ also give rise to a transformation from $L^2(\R,\dif\mu)$ onto the space $\mathfrak{H}^2(\C,\dif\nu)$ of square integrable, holomorphic functions on $\C$ through the integral transform
\begin{gather}
\begin{gathered}
\label{eqn:S-B_transform}
f(q)\mapsto [B f](\alpha)=\int_{\R}\dif\mu(q)\,\psi_{\alpha}(q)\, f(q)~.
\end{gathered}
\end{gather}
If the measure $\nu$ is fixed by the resolution of the identity \eqref{eqn:overcomplete} then \eqref{eqn:S-B_transform} is even unitary. Thus, the harmonic oscillator states play a similar role as plane waves for the Fourier transformation. In fact, for this specific example with $C=\frac{p^2}{2}$ one can show that up to a complex phase $\psi_a(x)$ is equal to 
\begin{align}
\label{eqn;exampleCS}
\begin{split}
\exp{-\frac{1}{2\hbar}(x-a)^2}
		&\propto\frac{1}{2\pi}\int\dif{k}\ee{-k^2/(2t)}\ee{i k(x-a)}\\
 		&= \left[\ee{-\hat{C}/\hbar}\delta_y(x)\right]_{y\to a}
\end{split}
\end{align}
where $y\to a$ denotes analytic continuation. 
\begin{definition}[Complexifier coherent states] 
A coherent state associated to a complexifier $C\in\cfunction{\phasespace}$ is an element of $\hilbert$ of the form
\beq
\label{eqn:ccs} 
\psi_{\m}(q)=\left[\ee{-C/\hbar}\delta_{q'}(q)\right]_{q'\to z(\m)}
\eeq
where $q'\to z(\m)$ denotes analytic continuation to $z(\m)=[\ee{-i\Lie_{\hamVF_C}} \,q](\m)$.
\end{definition}
This also explains why it was required that $C$ has the dimension of an action and has to satisfy a growth condition. Namely, if $\hat{C}/\hbar$ would not be dimensionless then the exponential would not be well-defined. Apart from that, $\ee{-\hat{C}/\hbar}$ must decay fast enough to smooth out the divergence of $\delta$. In equation \eqref{eqn;exampleCS} this is achieved by the gaussian factor that is added through the action of  $\ee{-\hat{C}/\hbar}$. 

By construction, the states \eqref{eqn:ccs} are eigenstates of the annihilators
\beq
\label{eqn:annihil}
\hat{z}_i=\ee{-\hat{C}/\hbar} \hat{q}_i \ee{\hat{C}/\hbar}
\eeq
resulting from quantizing \eqref{eqn:compl1} and therefore automatically obey  minimal uncertainty \eqref{eqn:minUncert}. Also condition \eqref{eqn:sc1} is clearly satisfied for the subalgebra spanned by $\mathbf{z}$ and $\mathbf{\bar{z}}$. All the other properties required in \defref{def:CS} do not follow directly but are very plausible since the states are essentially a regularization of a distributions. Therefore, it is reasonable to expect that they generate a resolution of identity and are  well-peaked on the phase space.


\section{Stability of coherent states in quantum mechanics}

\subsection{A stability criterion}
\label{sec:Stability}

Recall that the classical evolution is generated by the flow of the Hamiltonian vector field $\hamVF_H$ of the Hamiltonian $H$. For a not explicitly time dependent Hamiltonian this means that a phase space function $f$ evolves as $f(t)=\ee{(t-t_0) \Lie_{\hamVF_H}}f(t_0)$ and the quantum evolution of such a model is described by the operator  $\hat{U}(t,t_0)=\ee{\frac{1}{i\hbar} (t-t_0)\hat{H}}$.
\begin{definition}[Stable Coherent states]
\label{def:stable}
A coherent state $\psi_{z(t_0)}$ is stable under the evolution $\hat{U}(t,t_0)$  iff
\begin{align}
\label{eqn:stab}
\hat{U}(t,t_0)\,\psi_{z(t_0)}=\ee{i\, \lambda(t)}\psi_{z(t)}\quad \forall\; t\in\R^+
\end{align}
where $z(t)$ follows the classical motion of $z(t_0)$ on the phase space. The system of coherent states, $\{\psi_{z(\m)}|\m\in\phasespace\}$, is called stable iff all states are stable. 
\end{definition}
A simple example for a stable state system are the original coherent states $\psi_a$ of the harmonic oscillator. By a short calculation using $\psi_a= \ee{|a|^2/2}\sum_n \frac{a^n}{n!}\,\ket{n}$ and $\hat{H}_{ho}\ket{n}=\hbar\omega \, (n+1/2)\ket{n}$ one finds 
\bq
\hat{U}_{ho}(t)\,\psi_{a}
=\ee{it\omega/2}\psi_{a(t)}~.
\eq
Even more general, the set $\{\psi_{a}\}$ is stable under the evolution generated by a Hamiltonian\footnote{If $H$ is required to be self-adjoint then $f$ must be linear in $\hat{a}$ and $H$ must be of the form $H=\omega(t)a^{\dagger}a+f(t)a^{\dagger}+f^{\ast}(t)a+\beta(t)$. Note, there are even more general Hamiltonians under which certain proper subsets of coherent states are stable (see \cite{Kano:rt1976,Letz:eo1977}).}
that satisfies $\frac{1}{i\hbar}[\hat{H},\hat{a}]=i f(\hat{a},t)$ (see e.g. \cite{Mehta:te1966,Chand:nc1978}). For complexifier coherent states one can proof a very similar criterion: 
\begin{theorem}
\label{th:stabCCS}
Suppose the set of complexifier states $\mathcal{S}_{t_0}:=\{\psi_{z(\m(t_0))}|\m(t_0)\in\phasespace\}$ is over-complete and stable and the time evolution $\hat{U}(t,t_0)$ is unitary then 
\begin{align}
\label{eqn:condStab}
\frac{\dif}{\dif t}\,\hat{z}_j(t_0)=i f_j(\hat{z}_1,\dots \hat{z}_{\mathfrak{f}},t_0) \quad \forall\; j=1,\cdots, \mathfrak{f}~.
\end{align}
On the other hand, if equation \eqref{eqn:condStab} holds at $t_0$ then there exist an $\epsilon>0$ such that  $\mathcal{S}_{t}$ is stable with respect to $\hat{U}(t,t_0)$ for all $|t_0-t|<\epsilon$. 
\end{theorem}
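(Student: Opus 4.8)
\emph{Proof proposal.} In both directions the plan is to track the Heisenberg-evolved annihilator $\hat z_j(t):=\hat U(t,t_0)^{-1}\,\hat z_j\,\hat U(t,t_0)$, which is well defined because $\hat U$ is unitary and which reduces to $\hat z_j$ at $t=t_0$. First I would feed the stability relation \eqref{eqn:stab} into the annihilation-operator property: applying $\hat z_j$ to $\hat U(t,t_0)\psi_{z(\m(t_0))}=\ee{i\lambda(t)}\psi_{z(t)}$ and using $\hat z_j\psi_{z(t)}=z_j(t)\psi_{z(t)}$, the phase drops out and, after multiplying by $\hat U^{-1}$, one is left with
\beq
\hat z_j(t)\,\psi_{z(\m(t_0))}=z_j(t)\,\psi_{z(\m(t_0))}\,,
\eeq
i.e. the evolved state remains a joint eigenstate of the $\hat z_j(t)$ with the \emph{classically} transported eigenvalue $z_j(t)$. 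This single identity drives everything that follows.

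\textbf{Forward direction.} Next I would differentiate the displayed identity at $t=t_0$, obtaining $\hat G_j\,\psi_{z(\m(t_0))}=\dot z_j(t_0)\,\psi_{z(\m(t_0))}$ with $\hat G_j:=\frac{\dif}{\dif t}\hat z_j(t)\big|_{t_0}=\tfrac{i}{\hbar}[\hat H,\hat z_j]$. Since $\mathcal S_{t_0}$ is over-complete, $\m(t_0)$ sweeps out all of $\phasespace$, so \emph{every} complexifier state is an eigenstate of this one operator $\hat G_j$. The key step would be a characterisation lemma: an operator having all CCS as eigenvectors must be a function of the annihilators $\hat z_1,\dots,\hat z_{\mathfrak f}$ alone. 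I would establish it by passing to the holomorphic (Segal--Bargmann) representation attached to the complexifier, where the $\psi_{z(\m)}$ become reproducing kernels, $\hat z_k$ acts as multiplication by $z_k$ and $\hat z_k^\dagger$ as a raising derivative; writing $\hat G_j$ in normal-ordered form $\sum c_{\alpha\beta}(\hat z^\dagger)^\alpha\hat z^\beta$ and imposing $\hat G_j\psi_{z(\m)}\propto\psi_{z(\m)}$ for all $\m$ then forces every term with $\alpha\neq0$ to drop out, because the $(\hat z^\dagger)^\alpha\psi_{z(\m)}$ are independent of $\psi_{z(\m)}$. This yields $\hat G_j=if_j(\hat z,t_0)$, which is \eqref{eqn:condStab}; reading off the eigenvalue also shows $\dot z_j(t_0)=if_j(z(\m(t_0)))$ is holomorphic in $z$.

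\textbf{Converse.} For the other implication I would assume \eqref{eqn:condStab}, i.e. $[\hat H,\hat z_j]=\hbar f_j(\hat z,t_0)$ near $t_0$, put $\Psi(t):=\hat U(t,t_0)\psi_{z(\m(t_0))}$ and $\Xi_j(t):=(\hat z_j-z_j(t))\Psi(t)$ with $\Xi_j(t_0)=0$, and aim to show $\Xi_j\equiv0$. Combining the Schr\"odinger equation for $\Psi$, the commutator relation, the classical equation of motion $\dot z_j(t)=if_j(z(t))$ (the $\hbar\to0$ shadow of \eqref{eqn:condStab}), and the factorisation $f_j(\hat z)-f_j(z(t))=\sum_k(\hat z_k-z_k(t))\,h_{jk}(\hat z)$ --- legitimate since the $\hat z_k$ commute, $[\hat z_j,\hat z_k]=\ee{-\hat C/\hbar}[\hat q_j,\hat q_k]\ee{\hat C/\hbar}=0$ --- I expect the $\Xi_j$ to satisfy a closed linear homogeneous system
\beq
\frac{\dif}{\dif t}\Xi_j=\frac{1}{i\hbar}\hat H\,\Xi_j+i\sum_k h_{jk}(\hat z)\,\Xi_k\,.
\eeq
Uniqueness for this first-order system with vanishing initial data then gives $\Xi_j\equiv0$ on some interval $|t-t_0|<\epsilon$, so $\Psi(t)$ is a joint $\hat z_j$-eigenstate with eigenvalues $z_j(t)$; since the complexifier eigenvalue problem $\hat z_j\psi=z_j(t)\psi$ fixes $\psi_{z(t)}$ up to a scalar, this means $\Psi(t)=\ee{i\lambda(t)}\psi_{z(t)}$, the desired stability.

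\textbf{Main obstacle.} The hardest part should be the characterisation lemma in the forward direction: turning ``diagonal on every CCS $\Rightarrow$ function of $\hat z$ only'' into a rigorous statement requires controlling the Segal--Bargmann representation and the raising action of $\hat z^\dagger$ for a \emph{generic} complexifier, not just the oscillator, and handling domain issues for the unbounded $\hat G_j$ as well as the exchange of $\frac{\dif}{\dif t}$ with the resolution of identity. In the converse the analogous soft spots --- upgrading the formal ODE to genuine uniqueness in $\hilbert$ and keeping the classical orbit $z(t)$ inside the holomorphic chart --- are precisely what confine the conclusion to a neighbourhood $|t-t_0|<\epsilon$.
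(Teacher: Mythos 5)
Your proposal is sound and its forward direction coincides with the paper's: both reduce stability to the eigenvalue identity $\hat z_j(t)\,\psi_{z(\m(t_0))}=z_j(t)\,\psi_{z(\m(t_0))}$, differentiate at $t_0$, and invoke over-completeness to conclude that $\tfrac{\dif}{\dif t}\hat z_j(t_0)$ must be a function of the annihilators alone. The paper simply asserts that last implication, whereas you propose to prove it via a normal-ordered expansion in the Segal--Bargmann picture; that is a genuine strengthening, and you are right that it is the hardest point (the existence and control of such an expansion for a generic complexifier is exactly what the paper leaves implicit). Where you genuinely diverge is the converse: the paper Taylor-expands the Heisenberg operator $\hat z_j(t)=\sum_n \tfrac{(t-t_0)^n}{n!}\tfrac{\dif^n}{\dif t^n}\hat z_j(t_0)$ and argues that, since every derivative is again a function of the $\hat z_k(t_0)$ (by iterating \eqref{eqn:condStab}), the state $\psi_{z(t_0)}$ remains an eigenstate term by term; you instead set $\Xi_j(t)=(\hat z_j-z_j(t))\hat U(t,t_0)\psi_{z(t_0)}$ and derive a closed linear homogeneous system with $\Xi_j(t_0)=0$, concluding by uniqueness. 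The two routes are equivalent in content --- both rest on the operator identity $\tfrac{i}{\hbar}[\hat H,\hat z_j]=i f_j(\hat{\mathbf z})$, on the factorisation $f_j(\hat{\mathbf z})-f_j(\mathbf z(t))=\sum_k(\hat z_k-z_k(t))h_{jk}(\hat{\mathbf z})$ (which is what makes the paper's induction on derivatives work too), and on matching the operator ODE with the classical one for $z_j(t)$ --- but your Gr\"onwall-type formulation avoids convergence questions for the operator Taylor series at the price of an ODE-uniqueness statement for unbounded generators, and it makes explicit the final step (joint $\hat z_j$-eigenstates are unique up to a scalar, which unitarity forces to be a phase) that the paper buries in the ``iff'' of its first line. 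Both arguments share the same unaddressed domain issues, which you correctly flag.
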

\begin{proof} \mbox{}\\
According to definition \ref{def:stable} the system ${\cal S}_{t_0}$ is stable iff
\begin{align*}
&\hat{z}_j(t_0) \hat{U}(t,t_0)\ket{\psi_{z(t_0)}}=z_j(t) \hat{U}(t,t_0)\ket{\psi_{z(t_0)}}\\
\Leftrightarrow\quad
&\hat{z}_j(t) \ket{\psi_{z(t_0)}}=z_j(t) \ket{\psi_{z(t_0)}}\\
\Rightarrow\quad
&[\frac{\dif}{\dif t}\hat{z}_j(t_0)] \ket{\psi_{z(t_0)}}=[\frac{\dif}{\dif t}z_j(t_0)]\ket{\psi_{z(t_0)}}~.
\end{align*}
Here, $\hat{z}_j(t):=\hat{U}^{\dagger}(t,t_0) \hat{z}_j(t_0)\hat{U}(t,t_0)$ is the time-dependent operator in the Heisenberg picture. Since $\mathcal{S}_{t_0}$ is over-complete the last equation can only hold if $\frac{\dif}{\dif t}\hat{z}_j(t_0)$ is a function of the annihilators $\hat{z}_j(t_0)$, that is, if $\frac{\dif}{\dif t}\,\hat{z}_j(t_0)=i\, f_j(\hat{z}_1,\dots \hat{z}_{\mathfrak{f}},t_0)$ where $i$ is just introduced for convenience. 

On the other hand, if equation \eqref{eqn:condStab} holds then in some region $|t_0-t|<\epsilon$ the Taylor expansion of $\hat{z}(t)$ exists and reads  
\begin{align*}
\hat{z}_j(t)=\sum_n \frac{t^n}{n!} \frac{\dif^n}{\dif t^n} \hat{z}_j(t_0)=\hat{z}_j(t_0)+i\,(t-t_0)\,f_j(\hat{z}_1(t_0),\dots,\hat{z}_{\mathfrak{f}}(t_0),t_0)+\cdots ~.
\end{align*}
Therefore $\hat{z}_j(t) \ket{\psi_{z(t_0)}}=z_j(t) \ket{\psi_{z(t_0)}}$ in that region. 
\end{proof}
This implies the following two conditions that are necessary for coherent sates to be stable:
 \begin{corollary}
\label{corollary1_Part1}
Stable coherent states can only exist if it is possible to find an complexifier $C$ such that $\mathbf{z}=\ee{-i\Lie_{\hamVF_C}} \mathbf{q}$ obeys
 \begin{align}
\label{eqn:condStab2}
\frac{\dif}{\dif t}\,z_j(t)=i\,f_j(z_1,\dots z_{\mathfrak{f}},t_0) \quad \forall\; j=1,\cdots, \mathfrak{f}~.
\end{align}
\end{corollary}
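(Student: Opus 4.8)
The plan is to read off the classical statement as the ``diagonal part'' of the operator identity supplied by Theorem \ref{th:stabCCS}. Assuming stable, over-complete complexifier states under a unitary evolution, the theorem already guarantees the Heisenberg-picture relation $\frac{\dif}{\dif t}\hat{z}_j(t_0)=i\,f_j(\hat{z}_1,\dots,\hat{z}_{\mathfrak{f}},t_0)$. The idea is to apply both sides of this identity to a single coherent state $\ket{\psi_{z(t_0)}}$ and then convert every operator into its classical value using the defining annihilation operator property $\hat{z}_k\,\psi_{\m}=z_k(\m)\,\psi_{\m}$ from Definition \ref{def:CS}.

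First I would act with the right-hand side on $\ket{\psi_{z(t_0)}}$. Since $f_j$ is built only from the annihilators $\hat{z}_1,\dots,\hat{z}_{\mathfrak{f}}$ and the coherent state is a simultaneous eigenstate of all of them, each factor $\hat{z}_k$ may be replaced by the number $z_k$, so that $f_j(\hat{z}_1,\dots,\hat{z}_{\mathfrak{f}},t_0)\ket{\psi_{z(t_0)}}=f_j(z_1,\dots,z_{\mathfrak{f}},t_0)\ket{\psi_{z(t_0)}}$. For the left-hand side I would invoke the intermediate relation $[\frac{\dif}{\dif t}\hat{z}_j(t_0)]\ket{\psi_{z(t_0)}}=[\frac{\dif}{\dif t}z_j(t_0)]\ket{\psi_{z(t_0)}}$ established inside the proof of Theorem \ref{th:stabCCS}, where $z_j(t)$ is the classical trajectory of the label. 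Equating the two eigenvalues on the non-vanishing state $\ket{\psi_{z(t_0)}}$ then yields $\frac{\dif}{\dif t}z_j(t_0)=i\,f_j(z_1,\dots,z_{\mathfrak{f}},t_0)$, which is precisely \eqref{eqn:condStab2}. Because over-completeness forces the labels $z(\m)$ to sweep out all of $\phasespace$ and $t_0$ is arbitrary, the identity holds as an equation between phase space functions at every instant, and the $z_j$ appearing are exactly the components of $\mathbf{z}=\ee{-i\Lie_{\hamVF_C}}\mathbf{q}$.

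The main thing to be careful about is the passage from operators to numbers inside $f_j$: this substitution is clean only because $f_j$ depends on the annihilators alone and carries no factors of $\hat{z}_k^{\dagger}$, so that no operator ordering ambiguity survives once we project onto a simultaneous eigenstate. I would therefore state explicitly that $f_j$ is (at least formally) analytic in the $\hat{z}_k$ and that the coherent states are analytic vectors, so that the eigenvalue replacement is legitimate term by term. The genuinely delicate point is thus not the algebra but this analyticity and ordering bookkeeping, together with recording that the right-hand side inherits the \emph{same} symbol $f_j$ at the classical level, which is exactly what turns the criterion into a statement purely about the complex coordinate $\mathbf{z}$ and not its conjugate $\overline{\mathbf{z}}$.
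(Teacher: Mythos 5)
Your argument is correct and is precisely the reasoning the paper leaves implicit: the paper states this corollary without a separate proof, treating it as an immediate consequence of Theorem \ref{th:stabCCS}, and your derivation --- projecting the Heisenberg-picture identity onto a coherent state, using the simultaneous-eigenvalue property of the annihilators to replace $f_j(\hat{z}_1,\dots,\hat{z}_{\mathfrak{f}},t_0)$ by $f_j(z_1,\dots,z_{\mathfrak{f}},t_0)$, and invoking over-completeness so the labels exhaust $\phasespace$ --- is exactly the step the authors intend. Your added caveat about analyticity of $f_j$ in the annihilators and the absence of $\hat{z}_k^{\dagger}$ factors is a legitimate refinement of a point the paper glosses over.
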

\begin{corollary}
If neither the complexifier $C$ nor the Hamiltonian $H$ are explicitly time-dependent then $\ee{i\mathcal{L}_{\chi_{C}}}H$ must satisfy
\beq 
\label{eqn:classco}
\ee{i\mathcal{L}_{\chi_{C}}}H=i\,\sum_{j=1}^{\mathfrak{f}} p^j f_j(q_1,\dots,q_{\mathfrak{f}})+g(q_1,\dots,q_{\mathfrak{f}})
\eeq
for some functions $f_j$ and $g$ on $\config$ in order that the complexifier coherent state are stable. 
\end{corollary}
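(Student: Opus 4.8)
The plan is to convert the classical stability condition of Corollary \ref{corollary1_Part1} into the asserted normal form for $\tilde{H}:=\ee{i\Lie_{\hamVF_C}}H$ by exploiting that the complexifier map is a symplectomorphism. Since by hypothesis neither $C$ nor $H$ carries explicit time dependence, the generator $f\mapsto\{H,f\}$ of the evolution is a fixed phase-space operation, so evaluating $\dot{z}_j=\Lie_{\hamVF_H}z_j=\{H,z_j\}$ at $t_0$ shows that the function $f_j$ of Corollary \ref{corollary1_Part1} cannot depend on $t_0$; the condition thus reads $\{H,z_j\}=i\,f_j(z_1,\dots,z_{\mathfrak{f}})$ with $f_j$ a function of the $z_k$ alone. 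First I would rewrite the left-hand side by inserting $H=\ee{-i\Lie_{\hamVF_C}}\tilde{H}$ and using that the flow of $\hamVF_C$ preserves Poisson brackets,
\begin{equation*}
\{H,z_j\}=\{\ee{-i\Lie_{\hamVF_C}}\tilde{H},\,\ee{-i\Lie_{\hamVF_C}}q_j\}=\ee{-i\Lie_{\hamVF_C}}\{\tilde{H},q_j\}.
\end{equation*}

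The middle step uses two structural facts. The canonical bracket $\{p_j,q_k\}=\delta_{jk}$ of the preliminaries gives $\{\tilde{H},q_j\}=\partial\tilde{H}/\partial p_j$, while the flow map, being an algebra homomorphism, satisfies $f_j(z_1,\dots,z_{\mathfrak{f}})=f_j(\ee{-i\Lie_{\hamVF_C}}q_1,\dots,\ee{-i\Lie_{\hamVF_C}}q_{\mathfrak{f}})=\ee{-i\Lie_{\hamVF_C}}f_j(q_1,\dots,q_{\mathfrak{f}})$. Substituting both into the stability condition and applying the inverse map $\ee{i\Lie_{\hamVF_C}}$ to both sides strips off the common prefactor and leaves a relation on the real phase space,
\begin{equation*}
\frac{\partial\tilde{H}}{\partial p_j}=i\,f_j(q_1,\dots,q_{\mathfrak{f}}),\qquad j=1,\dots,\mathfrak{f}.
\end{equation*}

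Finally I would integrate this system in the momenta. Because each $f_j$ depends on the configuration variables only, the integrability conditions on the mixed second derivatives of $\tilde{H}$ are satisfied trivially, and termwise integration yields $\tilde{H}=i\sum_{j}p^j f_j(q)+g(q)$ with $g$ an integration `constant' depending on the $q$'s alone, which is exactly \eqref{eqn:classco}. The main obstacle I anticipate is not the integration but the justification of the two facts invoked in the middle step: that the analytically continued complexifier map is genuinely a Poisson homomorphism on the relevant algebra of observables (this is guaranteed for $\phasespace=\cotb{\config}$ by the symplectomorphism property noted after \eqref{eqn:compl1}, and holds only locally otherwise), and that $\ee{-i\Lie_{\hamVF_C}}$ commutes with the nonlinear $f_j$ in the stated sense. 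Both rest on the flow being an algebra automorphism of $\observ$, so I would want to specify precisely on which subalgebra the $f_j$ are required to live for these manipulations to be rigorous.
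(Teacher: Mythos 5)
Your argument is correct and follows essentially the same route as the paper's proof: both use that $\ee{-i\Lie_{\hamVF_C}}$ is a symplectomorphism to pull the stability condition $\{H,z_j\}=i f_j(\mathbf{z})$ back to $\{\ee{i\Lie_{\hamVF_C}}H,q_j\}=i f_j(\mathbf{q})$, and then observe that this is equivalent to \eqref{eqn:classco} because $\{\cdot,q_j\}$ is just $\partial/\partial p_j$ in the paper's conventions. Your version merely spells out the intermediate homomorphism identities and the momentum integration that the paper compresses into its final ``if and only if'' sentence.
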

\begin{proof}
Since $\frac{\dif}{\dif t} z_j(t)= \{H,z_j\}$ and since $\ee{-i\Lie_{\hamVF_C}}$ is a symplectomorphism it follows from corollary \ref{corollary1_Part1} that 
\bq 
0=\{H,z_j\}-i\,f_j(z_1(t),\dots,z_{\mathfrak{f}}(t)) =\ee{-i\Lie_{\hamVF_C}}\left[\{\ee{i\Lie_{\hamVF_C}}H,q_j\}-i\,f_j(q_,\dots,q_{\mathfrak{f}})\right]~.
\eq
Yet, $\{\ee{i\Lie_{\hamVF_C}}H,q_j\}$ is equal to $f_j(q_,\dots,q_{\mathfrak{f}})$ if and only if equation \eqref{eqn:classco} holds.
\end{proof}
To summarize: The existence of stable states is closely tied to the classical behavior of the model. The subtlety hereby is that conditions \eqref{eqn:condStab2} and \eqref{eqn:classco} are local. Even though it might be possible to find a complexifier satisfying these conditions in a neighborhood of a point $(p_0,q_0)\in\phasespace$ it does not ensure that the resulting parametrization $\mathbf{z}=\ee{-i\Lie_{\hamVF_C}} \mathbf{q}$ is sensible. For example, $\mathbf{z}$ might not be defined everywhere on the phase space or it might be multivalued. Yet, for the system ${\cal S}$ to be over-complete it is important that $\mathbf{z}$ provides a `good' parametrization:
\begin{lemma}
\label{newlemma}
If the system $S$ is over complete then $\{z_j\}$ is a proper parametrization of $\phasespace$ in the sense that almost every point $\mathfrak{m}\in\mathcal{M}$ is uniquely determined by a $\mathfrak{f}$-tuple $(z_1,\cdots,z_{\mathfrak{f}})$. Here, `almost' means up to a set of measure zero with respect to the measure \eqref{eqn:overcomplete}. 
\end{lemma}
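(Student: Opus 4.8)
The plan is to argue by contradiction, exploiting the fact that by the explicit construction \eqref{eqn:ccs} each state $\psi_{\mathfrak{m}}$ depends on the label $\mathfrak{m}$ only through the tuple $z(\mathfrak{m})=[\ee{-i\Lie_{\hamVF_C}}\mathbf{q}](\mathfrak{m})$. Hence if two distinct points $\mathfrak{m}\neq\mathfrak{m}'$ share the same tuple, $z(\mathfrak{m})=z(\mathfrak{m}')$, then $\psi_{\mathfrak{m}}$ and $\psi_{\mathfrak{m}'}$ are literally the same element of $\hilbert$. Suppose, against the claim, that the set of such non-uniquely-labelled points has positive $\nu$-measure. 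I would then fix a positive-measure set $A\subset\phasespace$ together with a map $\sigma\colon A\to\phasespace$ assigning to each $\mathfrak{m}\in A$ a partner $\sigma(\mathfrak{m})\neq\mathfrak{m}$ with $z(\sigma(\mathfrak{m}))=z(\mathfrak{m})$, so that $\psi_{\sigma(\mathfrak{m})}=\psi_{\mathfrak{m}}$ throughout $A$.

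Next I would feed this degeneracy into the overlap (reproducing) kernel $K(\mathfrak{m},\mathfrak{m}'):=\scal{\psi_{\mathfrak{m}},\psi_{\mathfrak{m}'}}$. Evaluating it on a pair $(\mathfrak{m},\sigma(\mathfrak{m}))$ gives $K(\mathfrak{m},\sigma(\mathfrak{m}))=\scal{\psi_{\mathfrak{m}},\psi_{\mathfrak{m}}}=K(\mathfrak{m},\mathfrak{m})$, so that the overlap function $\mathfrak{m}'\mapsto|\scal{\psi_{\mathfrak{m}},\psi_{\mathfrak{m}'}}|^2$ attains its global maximum not only at $\mathfrak{m}'=\mathfrak{m}$ but also at the distinct point $\mathfrak{m}'=\sigma(\mathfrak{m})$. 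This contradicts the peakedness property of Definition \ref{def:CS}, which demands that the overlap be concentrated in a single phase-space cell of Liouville volume $\tfrac{1}{2}|\scal{[\hat{p},\hat{q}]}_{\mathfrak{m}}|$: two equal-height maxima at phase-space-separated points cannot both lie in one such cell. Equivalently, since the eigenvalue relation $\hat{z}_j\psi_{\mathfrak{m}}=z_j(\mathfrak{m})\psi_{\mathfrak{m}}$ turns $z_j$ into an exact expectation value and $\mathbf{q},\mathbf{p}$ are recovered semiclassically from $\mathbf{z}$, the coincidence $\psi_{\mathfrak{m}}=\psi_{\sigma(\mathfrak{m})}$ forces $\mathfrak{m}$ and $\sigma(\mathfrak{m})$ to carry identical expectation values and hence to coincide.

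As a complementary route that makes the role of the measure transparent, I would use that $\ee{-i\Lie_{\hamVF_C}}$ is a symplectomorphism, so $Z\colon\mathfrak{m}\mapsto z(\mathfrak{m})$ is a local diffeomorphism preserving the Liouville measure underlying $\nu$. A degeneracy on a positive-measure set would make $Z$ a genuine $k$-fold ($k\geq 2$) covering of part of its image; pushing \eqref{eqn:overcomplete} forward through $Z$ then weights the state attached to an image point $\zeta=z(\mathfrak{m})$ by the entire fibre $Z^{-1}(\zeta)$, multiplying the local density of states by $k$ and producing a positive operator that exceeds $\mathbbm{1}$ on the corresponding spectral region, again contradicting $\int\dif\nu(\mathfrak{m})\,\ket{\psi_{\mathfrak{m}}}\bra{\psi_{\mathfrak{m}}}=\mathbbm{1}_{\hilbert}$.

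The step I expect to be the genuine obstacle is exactly the tension between these two routes: over-completeness as stated holds merely \emph{for some} measure $\nu$, and because $\psi_{\mathfrak{m}}$ sees only $z(\mathfrak{m})$, the bare resolution of the identity survives any re-weighting by the pushforward $Z_*\nu$ and so does not by itself exclude a non-injective labelling. The crux is therefore to pin down that $\nu$ is the normalized Liouville measure transported by the complexifier map -- equivalently, to invoke the peakedness and minimal-uncertainty content of Definition \ref{def:CS} rather than over-completeness in isolation -- so that the multiplicity of the fibres becomes detectable. Making this selection of $\nu$ explicit, and quarantining the measure-zero exceptional locus where $\hamVF_C$ or the analytic continuation to $z(\mathfrak{m})$ degenerates, is where the argument must be carried out with care.
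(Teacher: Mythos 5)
Your argument is correct in substance but runs in the opposite direction from the paper's. The paper argues by \emph{undercounting}: if $\mathbf{z}$ fails to parametrize a positive-measure subset $U\subset\phasespace$ properly, it introduces an auxiliary good parametrization $\mathbf{w}$ of $U$, asserts that the states $\psi_{\w(\mathfrak{m})}$, $\mathfrak{m}\in U$, cannot be expanded in terms of $\mathcal{S}$, and concludes that the resolution of identity degenerates to $\int_{\phasespace-U}\dif\nu\,\ket{\psi_{z(\mathfrak{m})}}\bra{\psi_{z(\mathfrak{m})}}\neq\mathbbm{1}_{\hilbert}$ --- i.e.\ a degenerate labelling leaves directions of $\hilbert$ uncovered. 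You instead argue by \emph{overcounting/degeneracy}: since \eqref{eqn:ccs} makes $\psi_{\mathfrak{m}}$ a function of $z(\mathfrak{m})$ alone, a non-injective labelling produces literally identical states at macroscopically separated points, which you then rule out via the peakedness (or expectation-value) property of Definition \ref{def:CS} and, alternatively, via the pushforward of $\nu$ under $\mathfrak{m}\mapsto z(\mathfrak{m})$. What your route buys is an honest identification of a real gap that the paper's one-line proof glosses over: because over-completeness in \eqref{eqn:overcomplete} is only required to hold \emph{for some} measure $\nu$, and the states see only $z(\mathfrak{m})$, a $k$-fold covering can always be absorbed into a rescaled $\nu$, so over-completeness by itself does not force injectivity --- one must either fix $\nu$ to be the transported Liouville measure or import peakedness, exactly as you say. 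What the paper's route buys is that it also covers the complementary failure mode it cares about in the surrounding discussion (regions of $\phasespace$ where $\mathbf{z}$ is undefined or collapses, so that states ``belonging to'' $U$ are missing from the span of $\mathcal{S}$), which your degeneracy argument does not directly address; on the other hand the paper's key assertion --- that a lower-dimensional or degenerate subfamily cannot be total on the relevant subspace --- is itself left unproven, so neither argument is fully rigorous. Your version, with peakedness made an explicit hypothesis, is the more defensible of the two.
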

\begin{proof}
If this were not the case then there would exist a subset $U$ in $\phasespace$ with a parametrization $\mathbf{w}$ such that states $\psi_{\w(\mathfrak{m})}$ associated to $\mathfrak{m}\in U$ cannot be expanded in terms of states in $\mathcal{S}$. But then 
\bq
\int_{\phasespace-U} \dif\nu(\mathfrak{m})\; \ket{\psi_{z(\mathfrak{m})}}\bra{\psi_{z(\mathfrak{m})}}\neq\mathbbm{1}_{\hilbert}
\eq
which contradicts the assumptions.
\end{proof}
Note, this does not exclude parametrizations that break down at single points nor does it exclude parametrization that differ in  region A from this in region B if the resolution can be modified accordingly, that is, if $\int_{\phasespace}$ splits into $\int_A +\int_B$. Especially the last point is of interest since in many models the dynamics can be fundamentally different in separated regions and therefore it might be necessary to consider different complexifiers for those regions. It is even possible that in certain areas of $\mathcal{M}$ one cannot find stable states while for others there exist some. For example, inside a finite potential well the motion is periodic which suggest that there exist stable states while outside the well the motion is unbounded. This issues will play a significant role in section \ref{Adapted_Complexifier}. For now the main aim is to find solutions to \eqref{eqn:classco}.
\subsection{First solutions}
\label{sec:first_solution}
For a one-dimensional not-explicitly time dependent system equation \eqref{eqn:classco} reduces to
\begin{equation}
	\label{eq:stab1}
	 	\sum_{n=0}^{\infty}\frac{i^{n}}{n!}\{C,H\}_{\left(n\right)}
		=i\, p\, f(q)+g(q)~,
\end{equation}
where $\{C,H\}_{\left(n\right)}=\{C,\{C,H\}_{\left(n-1\right)}\}$ and $f$ and $g$ are arbitrary smooth functions. To find a first solution of this equation consider the following ansatz: 
\begin{gather}
\label{eqn:ansatz1}
H=\frac{p^2}{2\,m}+V(q), \quad C=\frac{p^2}{2\,m\, \omega} +U(q)\\
\label{eqn:ansatz2}
\text{and}\quad \{C,H\}_{(2N+1)}=0\quad \text{for some } N\geq1
\end{gather}
Here, $m$ is the mass, $V$ and $U$ are twice continuously differentiable real valued functions and $\omega>0$ is a free parameter. Note that the even (odd) Poisson brackets in \eqref{eqn:ansatz1} can only contain summands of even (odd) powers of the momentum. Therefore, $f$ and $g$ in \eqref{eqn:ansatz1} have to be real and satisfy
\begin{align}
	\label{eqn:stab1a}
		&\sum_{n=0}^{N}\frac{\left(-1\right)^n}{\left(2n+1\right)!}\{C,H\}_{\left(2n+1\right)}=p\, f(q)\\
	\label{eqn:stab2a}
		&\sum_{n=0}^{N}\frac{\left(-1\right)^n}{\left(2n\right)!}\{C,H\}_{\left(2n\right)} = g(q)~.
\end{align}
A model with $\mathfrak{f}$ degrees of freedom can allow at most $\mathfrak{f}$ independent commuting Hamiltonian vector fields (first integrals of motion) which is why for $\mathfrak{f}=1$ condition \eqref{eqn:ansatz2} is equivalent to
\beq
\label{eqn:ansatz2a}
\{C,H\}_{(2N)}=\sum_{j=0}^{N} \beta_j C^j~.
\eeq
The restriction on the degree of the polynomial is due to the fact that $\{C,H\}_{(2 N)}$ can be at most a polynomial of degree $2N$ in $p$. For $N=1$ equations \eqref{eqn:stab1a}, \eqref{eqn:stab2a} and \eqref{eqn:ansatz2a} are replaced by 
\begin{gather}
\label{eqn:cond1}
\{C,H\}=p\,f(q),
\\
\label{eqn:cond2}
H-\frac{1}{2}\{C,H\}_{(2)}=g(q),
\\
\label{eqn:cond3}
\text{and}\quad\{C,H\}_{(2)}= \beta_1 C+\beta_0~.
\end{gather}
For this choice the first two conditions in \eqref{eqn:ansatz1} are equivalent to 
\bq 
( \omega^{-1} V'-U')/m=f\quad \text{and}\quad \frac{p^2}{2m\omega}\, f'-f\,U'=2(H-g)
\eq 
where $'$ denotes the derivative with respect to $q$. Thus $f'$ must be equal to $\omega$ and $\beta_1$ must equal $2\,\omega$. Inserting this into \eqref{eqn:cond3} yields
\bq
f \,U'=-2\,\omega[U+\beta_2] 
 \eq
which is solved by $f=\omega q+\alpha$ and $U=\frac{\lambda}{\omega} (q+\alpha/\omega)^{-2}+\beta_2$ for some real constants $\lambda$ and $\alpha$. Since $\alpha$ just defines a shift in the configuration variable it can be set to zero without loos of generality. Concluding,
\begin{gather}
\begin{gathered}
\label{radial_oscillator}
H=\frac{p^2}{2m}+\frac{m}{2}\omega^2 q^2 +\lambda q^{-2}
\\
C=\frac{p^2}{2m\omega} + \frac{\lambda}{\omega} q^{-2}
\end{gathered}
\end{gather}
solves \eqref{eqn:ansatz1} and \eqref{eqn:ansatz2} for $N=1$. If $\lambda$ is zero this reduces to the usual Hamiltonian and complexifier of the Harmonic oscillator. The term $\lambda/q^2$ can be interpreted as the angular momentum contribution of a two dimensional oscillator with constant radius which  justifies the name \emph{radial oscillator}. The complexification $z:=\ee{-i\Lie_{\hamVF_C}} q$ can be computed by solving the `equation of motion'
\beq
\label{eqn:a}
\frac{\dif\;}{\dif s} q=\{C,q\}
\eeq
for $q(s):=\ee{s\Lie_{\hamVF_C}} q$ and then extending it analytically ($s\to - i$). Instead of integrating equation \eqref{eqn:a} one can integrate 
\bq
\frac{\dif\;}{\dif s} \left[\frac{1}{2}\left(\frac{\dif q}{\dif s}\right)^2 +\frac{\lambda}{m\omega^2} q^{-2}\right]=0
\quad\implies\quad
\frac{1}{2}\left(\frac{\dif q}{\dif s}\right)^2 +\frac{\lambda}{m\omega^2} q^{-2}=\frac{C}{m\omega}
\eq
since $\frac{\dif\;}{\dif s}\,C=0$ and $\{C,q\}=p/(m\omega)$. This finally gives
\bq
z=q(s=-i)=\left[ \left(q-i\frac{p}{m\omega}\right)^2-2\frac{\lambda}{m\omega^2}\, q^{-2}\right]^{1/2}~.
\eq
Despite that $z$ depends on a square root, one can derive a well-defined quantum operator whose eigenfunctions $\psi_z=[\ee{-\hat{C}/\hbar}\delta_y]_{y\to z}$ can be expressed in terms of modified Bessel functions of the second kind. Furthermore, the obtained states share all desired properties, that is, they form an over-complete set, minimize the uncertainty, have small fluctuations and are of course stable under the dynamics. More details on the quantization of this system can be found in appendix \ref{app:A}.

\subsection{A no-go theorem}
\label{no-go}

After the method applied in the preceding section has been proven so successful it is worth testing whether more solutions to \eqref{eq:stab1} can be found by generalizing the ansatz \eqref{eqn:ansatz1}. Given two real valued, strictly positive, continuously differentiable functions $\alpha$ and $\beta$ on the configuration space let
\begin{gather}
\label{ansatz2a}
\begin{gathered}
H=\alpha(q)\,\frac{p^2}{2}+V(q)
\quad\text{and}\quad
C=\frac{1}{2}[\beta(q)\,p]^2+U(q)\\[5pt]
\text{and suppose} \quad \{C,H\}_{(2N+1)}=0\quad\text{for some } N>1~.
\end{gathered}
\end{gather}
Since $\beta$ should be non-zero in order that the complexifier is well-defined there exist a canonical transformation sending
\bq
p\to p\,\beta(q)
\quad\text{and}\quad q\to \int_0^q \dif x\; [\beta(x)]^{-1}
\eq
so that the first line in \eqref{ansatz2a} may be exchanged through
\beq
\label{ansatz2}
H=\alpha(q)\, p^2+V(q)\quad\text{and}\quad
C=\frac{p^2}{2}+U(q)
\eeq
without loos of generality. To compute the multiple Poisson brackets of \eqref{ansatz2} it turns out to be handy to introduce the operators $\hat{X}=p\,\frac{\partial\;}{\partial q}$ and $\hat{P}=-U' \frac{\partial\;}{\partial p}$ which yields
\begin{align}
\label{help:A}
\{C,H\}_{(n)}=& \hat{X}^n\, H 
+\sum_{\nu=0}^{n-1} \hat{X}^{n-1-\nu}\,\hat{P}\,\hat{X}^{\nu}\,H
+\sum_{\mu+\nu=0}^{n-2} \hat{X}^{n-1-\nu-\mu}\,\hat{P}\,\hat{X}^{\nu}\,\hat{P}\,\hat{X}^{\mu}\,H
+\cdots~.
\end{align}
The general strategy in the subsequent analysis is to reorder the terms by powers in the momentum. The degree in $p$ of the terms $\hat{X}^{\nu_1}\,\hat{P}\,\hat{X}^{\nu_2} \cdots $ only depends on the number of operators $\sharp \hat{O}$  that are applied, i.e.
\bq
\deg_{\,p}\,\hat{X}^{\nu_1}\,\hat{P}\,\hat{X}^{\nu_2}\, \cdots\, V=\sharp \hat{X}-\sharp \hat{P}
\eq
and
\bq
\deg_{\,p}\,\hat{X}^{\nu_1}\,\hat{P}\,\hat{X}^{\nu_2}\, \cdots\, \alpha\, p^2=\sharp \hat{X}-\sharp \hat{P}+2
\eq
where $\deg_{\,p}\,$ denotes the degree in $p$. Note that the total number of operators, $\sharp \hat{X}+\sharp \hat{P}$, corresponds to the grade $n$ of the bracket $\{C,H\}_{(n)}$. For this reason the bracket either contains only even or odd powers in $p$ depending on whether $n$ is even or odd respectively. This can be made more explicit by replacing \eqref{help:A} through
\beq
\{C,H\}_{(n)}=\sum_{\stackrel{\mu\in\N}{0\,\leq\,\mu\,\leq\,(n+2)/2}} \!\!f^{n}_{n+2-2\mu}\; p^{n+2-2\mu}~.
\eeq
The coefficients $f^{n}_{n+2-2\mu}$ are derived from \eqref{help:A}, that is, 
\begin{align}
\label{eqn:f}
\begin{split}
f^{n}_{n+2}:= \alpha^{(n)}~,\quad
f^{n}_{n}:=-\sum_{\nu}^{n-1}(\nu+2) \left(\frac{\partial\;}{\partial q}\right)^{n-1-\nu} U'\;\alpha^{(\nu)}
		+ V^{(n)}~,\quad\cdots
\end{split}
\end{align}
where $g^{(n)}$ is the $n$th derivative of a function $g$ with respect to $q$ and the upper index of $f^n_m$ revers to the total number of operators while the lower one indicates the power in $p$. 
Instead of trying to directly compute these coefficients it is much more useful to consider the following recursion relation: 
\beq
\label{recursion}
f^{n}_{m}= \frac{\partial\;}{\partial q}\, f^{n-1}_{m-1} - U' \, (2m+1)\,  f^{n-1}_{m+1}, \quad m\leq n+2,
\eeq
with $f^0_0\equiv V$ and $f^0_2\equiv\alpha$. 

Furthermore, due to $\frac{1}{2}\,\deg_{\,p} \,\{C,H\}_{(2N)}\leq N+1$ and $\{C,H\}_{(2N+1)}=0$ one finds that 
\beq
\label{eq:power0}
\{C,H\}_{(2N)}=\sum_{m=0}^{N+1} a_{m}\, C^{m}
\eeq
for some constants $a_{m}$. Apart from that, all summands in \eqref{eq:stab1} whose degree in $p$ is exceeding one must cancel which implies that 
\beq
\label{cancel1}
\begin{split}
\sum_{n=m-1}^N \frac{(-)^n}{(2n)!}\, f^{2n}_{2m}=0
\quad\text{and}\quad
\sum_{n=m-1}^{N-1} \frac{(-)^n}{(2n+1)!}\, f^{2n+1}_{2m+1}=0
\end{split}
\eeq
 for all $m>1$. Note, there can be only one non-trivial term of power $2N+2$ and one of $2N+1$ which is why $ f^{2N}_{2N+2}$ and $ f^{2N-1}_{2N+1}$ have to vanish. This in turn implies $\deg_{\,C} \,\{C,H\}_{(2N)}\leq N$ and $\deg_{\,q} \alpha\leq 2N-2$ since $f^{2N-1}_{2N+1}$ is proportional to $\alpha^{(2n-1)}$.
The next non-trivial contributions are those of degree $2N$ and $2N-1$. Due to equation \eqref{cancel1} the  coefficients $f^{2N}_{2N}$ and $f^{2N-1}_{2N-1}$ must satisfy
\beq
\label{help:B}
f^{2N}_{2N}=(2 N)(2N-1)f^{2N-2}_{2N}
\quad\text{and}\quad
f^{2N-1}_{2N-1}=(2N-1)(2N-2)f^{2N-3}_{2N-1}~.
\eeq
Yet, taking the derivative of the equation on the right hand side and inserting equality \eqref{recursion} yields
\bq
\frac{\partial\;}{\partial q}\; f^{2N-1}_{2N-1}=(2N-1)(2N-2) \frac{\partial\;}{\partial q}\;f^{2N-3}_{2N-1}
\quad\Leftrightarrow\quad 
f^{2N}_{2N}=(2N-1)(2N-2)f^{2N-2}_{2N}
\eq
which obviously contradicts the first equation in \eqref{help:B}. Thus, $f^{m}_{2N}$ must vanish for all $m$. With the next term one can proceed in the same manner. Combining \eqref{cancel1}, \eqref{recursion} and $f^{2N+1}_{2N-1}=0$ gives
\bq
0=\frac{\partial\;}{\partial q}\sum_{n=N-2}^N \frac{(-)^{n}}{(2n)!} f^{2n}_{2N-2}
=\sum_{n=N-2}^{N-1} \frac{(-)^{n}}{(2n)!} f^{2n+1}_{2N-1}~.
\eq
This in turn contradicts the second equation in \eqref{help:B} and therefore implies $f^{m}_{2N-1}=0$ for all $m$ which also reduces the degree of the polynomials $\alpha$ and $\{H,C\}_{2N}$ and proves 
\begin{lemma}
\label{lem:vanishing}
For $N>1$ there exist an integer $M$ with $2\leq M<2N-1$ such that $f_m^{n}=0$ for all $m > M$ and all $n$.
\end{lemma}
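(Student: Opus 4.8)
The plan is to prove Lemma~\ref{lem:vanishing} by a downward induction on the momentum power $m$, taking the two explicit cancellations carried out above (for $m=2N$ and $m=2N-1$, on top of $f^{2N}_{2N+2}=\alpha^{(2N)}=0$ and $f^{2N-1}_{2N+1}=\alpha^{(2N-1)}=0$) as the base of the induction. Since these already give $f^n_m=0$ for every grade $n$ and every power $m\geq 2N-1$, the choice $M=2N-2$ satisfies $2\leq M<2N-1$ and establishes the statement outright; the real content of the induction is to show how far below $2N-2$ the threshold $M$ may be pushed. Throughout I will use only the cancellation identities \eqref{cancel1}, the recursion \eqref{recursion}, and the fact that $\{C,H\}_{(2N+1)}=0$ forces $\{C,H\}_{(n)}=0$ for all $n\geq 2N+1$, so that the sums in \eqref{eq:stab1} are finite and every $f^{2N+1}_m$ vanishes.

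The first move in the passage from $m+1$ to $m$ is a reduction that simplifies the bookkeeping considerably. Assume $f^n_k=0$ for all $n$ and all $k>m$. Writing \eqref{recursion} at grade $n+1$ and power $m+1$ gives $f^{n+1}_{m+1}=\partial_q f^{n}_{m}-U'\,(2m+3)\,f^{n}_{m+2}$; here $f^{n+1}_{m+1}=0$ (it sits at power $m+1>m$, or is a grade-$(2N+1)$ coefficient) and $f^{n}_{m+2}=0$ (power $m+2>m$), so $\partial_q f^{n}_{m}=0$ for every surviving grade $n$. Hence all power-$m$ coefficients are $q$-independent constants, and in particular the leading one $f^{m-2}_m=\alpha^{(m-2)}$ is constant, which lowers $\deg_q\alpha$ by one at each step.

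The contradiction that kills power $m$ is then produced exactly as in \eqref{help:B}: on one side stands the direct cancellation \eqref{cancel1} at power $m$; on the other one differentiates the cancellation \eqref{cancel1} at power $m-1$ in $q$ and transports it to power $m$ through \eqref{recursion}, the $U'$-term dropping out because the power-$(m+1)$ coefficients vanish by hypothesis. The two resulting proportionalities among the (now constant) power-$m$ coefficients carry factorial prefactors that differ — the analogue of $(2N)(2N-1)\neq(2N-1)(2N-2)$ — so their incompatibility forces the coefficients to vanish. The transport in the second relation requires the power-$(m-1)$ cancellation to be homogeneously zero, which holds whenever $m-1\geq 2$ because the right-hand side of \eqref{eq:stab1} is at most linear in $p$; it breaks at $m-1=1$, where the power-one part equals $i\,p\,f(q)\neq 0$. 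This is precisely the floor that yields $M\geq 2$, while the base case gives $M\leq 2N-2<2N-1$.

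I expect the main obstacle to be the inductive step at the lower powers. Near the top \eqref{cancel1} has only two terms, so the two proportionalities pin down both grade-coefficients at once; as $m$ decreases the sum \eqref{cancel1} acquires additional higher-grade terms $f^{m}_m,f^{m+2}_m,\dots,f^{2N}_m$, and two relations no longer determine them all. The plan to close the system is to feed in the extra structure already at hand: the constancy established above, which ties the leading coefficient to $\deg_q\alpha$, together with the polynomial form \eqref{eq:power0} of $\{C,H\}_{(2N)}$ in $C=\tfrac12 p^2+U$, which expresses every grade-$2N$ coefficient through the finitely many constants $a_j$ (and has already forced the top one to vanish when power $2N$ was removed). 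One must then verify that these auxiliary relations make the linear system full rank and that the factorial prefactors never accidentally coincide before the floor is reached; wherever this verification fails the induction simply halts, and the power at which it halts defines the admissible $M\in[2,2N-1)$ demanded by the lemma.
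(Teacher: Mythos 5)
Your proposal is correct and follows essentially the same route as the paper: the lemma is exactly the conclusion of the cancellations at powers $2N+2$, $2N+1$, $2N$ and $2N-1$ (via \eqref{cancel1}, \eqref{recursion} and the incompatible prefactors in \eqref{help:B}), so $M=2N-2$ works and $2\leq 2N-2<2N-1$ for $N>1$. Your additional downward-induction scaffolding, including the clean observation that the recursion forces $\partial_q f^n_m=0$ once all higher powers vanish, corresponds to the paper's continuation \emph{after} the lemma (where lemma \ref{lem:U} and $q$-degree counting close the system), and you correctly note that the lemma itself does not require pushing $M$ below $2N-2$.
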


The above reasoning can be repeated to show that also summands of lower degree in the momentum have to vanish. But note, the number of conditions needed to derive a contradiction is increasing when the degree in $p$ is decreasing since there are more and more non-trivial coefficients that contribute. 
Nevertheless, one can obtain a new condition for the coefficients $f^{2n+1}_{\ast}$ from
\beq
\label{cond1}
0=\sum_{n= m-2}^N \frac{(-)^{n}}{(2n)!}   f^{2n}_{2m}\quad \forall \; m>0
\eeq
by taking the derivative and using \eqref{recursion}. More specifically, this yields
\begin{align}
\nonumber
0&=\sum_{n= m-2}^{N-1} \frac{(-)^{n}}{(2n)!}   \left[f^{2n+1}_{2m+1} -(2m+2)\,U' f^{2n}_{2m+2}\right]\\
\label{cond2}
&=\sum_{n= m-2}^{N-1} \frac{(-)^{n}}{(2n)!}   f^{2n+1}_{2m+1}\quad \forall \; m>0 ~.
\end{align}
From this one can deduce another condition,
\beq
\label{cond3}
\sum_{n= m-2}^{N-1} \frac{(-)^{n}}{(2n)!}   f^{2n+2}_{2m+2}\quad \forall \; \mu>0~,
\eeq
by first taking the derivative and then applying \eqref{recursion} and \eqref{cond2} and so forth. Repeating this procedure also for the odd terms generates $2m-1$ independent conditions for the terms of power $2m$ and $2m+1$. On the other hand, the number of non-trivial terms $\sharp f^{\ast}_m$ in a tower of constant power $m$ increases as $\sharp f^{\ast}_{2m}= N-m+1$ and $\sharp f^{\ast}_{2m+1}= N-m+2$. Since the procedure breaks up as soon as $\sharp f^{\ast}_m$ is greater than the number of available conditions, namely if  $\sharp f^{\ast}_{2m/2m+1}\geq 2m-1$, another trick is needed to eliminate more coefficients.
\begin{lemma}
\label{lem:U}
Let $C=\frac{p^2}{2}+U(q)$, $F=\sum\limits_{n=0}^m F_n\,C^n$ and let $G=G(p,q)$ be a phase space function that is polynomial in $p$. 
\begin{itemize}
\item[(a.)] If $\deg_{\,p} \,G= 2m+1$ then $\{C,G\}=F(C)$ has a solution iff $U=c_1\,q+c_0$ for some constants $c_1,c_0\in\R$.

\item[(b.)] If $\deg_{\,p} \,G= 2m-1$ then $\{C,G\}=F(C)$ has a solution iff $U=c_1\,(q+c_2)^{-2}+c_0$ for some constants $c_2,c_1,c_0\in\R$.
\end{itemize}
\end{lemma}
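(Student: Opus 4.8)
The plan is to turn the functional equation $\{C,G\}=F(C)$ into a recursion in the powers of $p$, dispose of sufficiency by an explicit construction, and prove necessity by descending through that recursion. Throughout I assume $F_m\neq0$, i.e. $F$ has genuine degree $m$; without this the degree of $G$ is unrelated to that of $F$ and the statement is false (for instance $U\propto\sqrt q$ solves $\{C,G\}=\mathrm{const}$ with $\deg_{\,p}G=3$). Since $C=\frac{p^2}{2}+U(q)$ makes $F(C)$ even in $p$, while $\hamVF_C=\{C,\cdot\}=p\,\partial_q-U'\,\partial_p$ reverses the parity in $p$, the even part of $G$ must be $\hamVF_C$-invariant and may be dropped; so without loss of generality $G=\sum_\ell g_{2\ell+1}(q)\,p^{2\ell+1}$. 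Writing $F(C)=\sum_\ell\phi_{2\ell}(q)\,p^{2\ell}$ with $\phi_0=F(U)$ and $\phi_{2m}=F_m/2^m$, and matching the coefficient of $p^{2\ell}$ on both sides, yields
\begin{equation*}
g_{2\ell-1}'-(2\ell+1)\,g_{2\ell+1}\,U'=\phi_{2\ell},\qquad \ell=0,1,\dots,
\end{equation*}
with $g_{-1}=0$ and $g_k=0$ above the top degree. This mirrors the bookkeeping behind \eqref{recursion}, now read against a single target $F(C)$.

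For the two \emph{if} directions I would exhibit $G$ and $F$ directly using a polynomial ``angle'' variable. In case (a), if $U=c_1q+c_0$ with $c_1\neq0$ then $\theta:=-p/c_1$ satisfies $\hamVF_C\theta=1$, so $G:=\theta\,C^{m}$ has $\deg_{\,p}G=2m+1$ and $\hamVF_C G=C^{m}=:F(C)$; the degenerate slope $c_1=0$ (constant $U$) admits a degree-$(2m+1)$ solution by the same coefficient matching once the constant term of $F$ is chosen so that $F(U)=0$. In case (b), if $U=c_1(q+c_2)^{-2}+c_0$ a short computation gives $\hamVF_C\big[p\,(q+c_2)\big]=2(C-c_0)$, whence $G:=p\,(q+c_2)\,C^{m-1}$ has $\deg_{\,p}G=2m-1$ and $\hamVF_C G=2(C-c_0)\,C^{m-1}=:F(C)$, of degree $m$. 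Both $F$ are of the admissible form $\sum_{n=0}^m F_nC^n$, which settles sufficiency.

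The heart of the lemma is the \emph{only if} direction, which I would attack through the recursion. The top equation is where the two cases split: in (a) one has $\deg_{\,p}G=2m+1>2m=\deg_{\,p}F(C)$, so the $\ell=m+1$ relation forces $g_{2m+1}'=0$ (the top coefficient is constant), whereas in (b) the degrees meet and the $\ell=m$ relation pins $g_{2m-1}$ to be affine via $g_{2m-1}'=\phi_{2m}=F_m/2^m$. From there each lower coefficient is obtained by integrating $g_{2\ell-1}'=\phi_{2\ell}+(2\ell+1)g_{2\ell+1}U'$, introducing one constant per step, until the bottom relation $-g_1U'=\phi_0=F(U)$ becomes a non-differential constraint the whole tower must satisfy. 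Substituting the descent into this last relation produces a single ordinary differential equation for $U$, whose admissible solutions I expect to be exactly $U=c_1q+c_0$ in case (a) and $U=c_1(q+c_2)^{-2}+c_0$ in case (b). A conceptual cross-check is that $\hamVF_C^2G=\hamVF_C F(C)=F'(C)\{C,C\}=0$, so $G$ is affine along the $C$-flow, and demanding that a polynomial-in-$p$ function be affine in the flow time is precisely what restricts $U$ to these ``solvable'' potentials.

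The main obstacle is this final step: executing the descent for arbitrary $m$ while tracking the free parameters $F_0,\dots,F_m$ together with the integration constants, and then showing that the resulting ODE admits no solutions beyond the two stated families. In particular one must rule out spurious candidates that arise only when $\deg_{\,p}G$ and $\deg_{\,p}F(C)$ are mismatched, which is exactly where the standing hypothesis $F_m\neq0$ is indispensable; I would isolate the leading balance in $U'$ at the bottom of the tower to pin down the functional form and verify, by matching against the constructions above, that both families are attained and nothing else is.
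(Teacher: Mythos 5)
Your sufficiency constructions are correct and in fact more explicit than the paper's (the identities $\hamVF_C(-p/c_1)=1$ and $\hamVF_C[p(q+c_2)]=2(C-c_0)$ check out, and $G=\theta C^m$, $G=p(q+c_2)C^{m-1}$ have the right degrees), and your remark that the statement needs $F_m\neq 0$ to be meaningful is a fair observation. But the substantive half of the lemma --- the \emph{only if} direction, which is the part actually used in the no-go argument of section \ref{no-go} --- is not proved in your proposal. You set up the correct coefficient recursion $g_{2\ell-1}'-(2\ell+1)g_{2\ell+1}U'=\phi_{2\ell}$ and describe a descent ending in the constraint $-g_1U'=F(U)$, but you then write that you ``expect'' the resulting ODE to have only the two stated families as solutions and explicitly flag the execution of the descent for arbitrary $m$ as the main obstacle. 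That is precisely the step that constitutes the proof; as it stands the necessity direction is a plan, not an argument.

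The paper avoids the tower entirely by exploiting the structure of the solution space of the linear first-order PDE: since $C$ is the only invariant of its own flow, any two solutions of $\{C,G\}=F(C)$ differ by a function of $C$, so the general solution can be written as $G=F(C)\,g(p,q)+g_h(C)$ (resp.\ $G=\frac{F(C)}{C}\,g+g_h(C)$ after shifting $F_0$ to zero), where degree counting in $p$ forces $g=p\,h_1(q)+h_0(q)$ to be linear in $p$. The whole problem then collapses to the single equation $\{C,g\}=1$ (case (a.)) or $\{C,g\}=C$ (case (b.)), i.e.\ $p(p\,h_1'+h_0')-h_1U'=1$ or $=\tfrac{p^2}{2}+U$, and matching powers of $p$ in \emph{this one equation} immediately gives $U'$ constant, respectively $h_1=\tfrac12(q+c_2)$ and $2U=-(q+c_2)U'$, hence $U=c_1(q+c_2)^{-2}+c_0$. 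If you want to salvage your route, you would need to either carry out the full descent with all integration constants and show the terminal ODE for $U$ has no other solutions, or import the paper's observation that the odd-in-$p$ part of any polynomial solution is unique (homogeneous solutions being even polynomials in $C$) and is therefore forced into the product form $F(C)\cdot p\,h_1(q)$, which is what reduces the problem to one line.
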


\begin{proof}
\mbox{}\\
{\it Proof of (a.):}
Recall that the complexifier is the only constant of motion with respect to the flow generated by $C$ itself so that the homogeneous solutions to the first order PDE $\{C,G\}=F(C)$ are functions of $C$. Therefore, the most general solution $G$ with $\deg_{\,p} \,G= 2m+1$ is of the form 
\bq
G=F(C)\, g(p,q)+g_h(C)~.
\eq
Here, $g_h$ is any polynomial of $C$ whose degree is not exceeding $\deg_{\,C}\, F$ and $g=g(p,q)$ is a linear function in $p$ for which
\bq
1\overset{!}{=}\{C,g\}=p\,\frac{\partial\,g}{\partial q}-\frac{\partial\,g}{\partial p}\, U'~.
\eq
Since $U$ and $\frac{\partial g}{\partial p}$ are independent of $p$ this can be only solved if $g$ and $U'$ are constant in $q$.
\\
{\it Proof of (b.):}
Without loos of generality one can assume that $F_0=0$ because the Poisson bracket does not change when $C$ is shifted by a constant. Suppose $g=p\, h_1(q)+h_0(q)$ then
\bq
G=\frac{F(C)}{C} g(p,q)+g_h(C)
\eq
is a generic solution that has the right degree in $p$ iff
\bq
C\overset{!}{=} \{C,g\}=p(p\,h_1'+h_0')-h_1 U'~.
\eq
This implies that $h_0$ is constant, $h_1=\frac{1}{2}( x+ c_2)$ and 
$2\,U= - (x+c_2) U'$ which is solved by $U=c_1\,(q+c_2)^{-2}+c_0$.
\end{proof}
Remember that lemma \ref{lem:vanishing} guarantees the existence of a non-zero integer $m_0<N$, for which $\deg_{\,C}\, \{C,H\}_{(2N)}=m_0$ and $\deg_{\,p}\, \{C,H\}_{(2N-1)}=2m_0\pm1$, and that the potential $U$ must be either linear in $q$ (for $2m_0+1$) or proportional to $q^{-2}$ (for $2m_0-1$)  according to lemma \ref{lem:U}. 
The degree in $q$ of the functions $f^{\mu}_{\nu}$ can be directly determined from equation \eqref{eqn:f}, that is,
\bq
\deg_{\,q}\, f^{n}_{n+2}=\deg_{\,q}\, \alpha - n
\eq
and 
\bq
\deg_{\,q}\, f^{n}_{n+2-2m}=\max\left[ \deg_{\,q}\, \alpha-n+m+ m \,\deg_{\,q}\, U' ,\,\deg_{\,q}\, V-n+m-1+ (m-1) \,\deg_{\,q}\, U' \right]~.
\eq
Suppose the coefficients $f^{\ast}_{2m_0+1}$ are not zero then they have to be constant as $f^{\ast}_{2m_0+2}=0$. In this case, lemma \ref{lem:U} states that also $ U'$ has to be constant which is why
 \bq
 0\overset{!}{=} \deg_{\,q}\,f^{2m_0-1}_{2 m_0+1}= \deg_{\,q}\,\alpha-(2m_0-1)
 \eq
and 
\bq
0\overset{!}{=} \deg_{\,q}\,f^{2m_0+1}_{2 m_0+1}
 =\max\left[ \deg_{\,q}\,\alpha-(2m_0+1)+1 ,\,\deg_{\,q}\, V-(2m_0+1)+1-1\right]~.
 \eq
This forces $\deg_{\,q}\,\alpha=2m_0-1$ and $\deg_{\,q}\, V=2m_0+1$.  Yet,
\bq
0\overset{!}{=} \deg_{\,q}\,f^{2m_0+3}_{2 m_0+1}
 =\max\left[2m_0-1-(2m_0+3)+2 ,2m_0 -(2m_0+3)+1\right]<0
 \eq
 leads to an inconsistency and thus $f^{\ast}_{2m_0+1}$ has to vanish which means that $f^{\ast}_{2m_0}$ must be constant. By using lemma \ref{lem:U} (b.) one can deduce that the potential $U$ is proportional to $q^{-2}$ and consequently $\deg_{\,q}\,\alpha$ must be equal to $2m_0-2$ and $\deg_{\,q}\,V$ must equal $2m_0$. Again
\bq
0\overset{!}{=} \deg_{\,q}\,f^{2m_0+2}_{2 m_0}<0
 \eq
leads to a contradiction so that all $f^{\ast}_{2 m_0}$ have to vanish. By repeating this argument one can finally show that $f^{\ast}_m$ must vanish for all $m>2$. Hence, $f^{\ast}_2$ is constant and $f^{n-1}_1=-f^n_2 x+c^{n-1}_1$ for some $c^{n-1}_1\in\R$ which can be only achieved if $\alpha$ is constant as well and $U$ is proportional to $x^{-2}$. Obviously, the brackets $\{H,C\}_{(2n)}$ and $\{H,C\}_{(2m)}$, $n,m\neq 0$, can only differ by an over-all constant so that already $\{H,C\}_{(3)}=0$. This proves the following theorem.
\begin{theorem}
Suppose the Hamiltonian and the complexifier are quadratic in the momentum $p$ then the only system that solves $\{C,H\}_{(2N+1)}=0$ for some $N>0$ and \eqref{eq:stab1} is the radial oscillator \eqref{radial_oscillator} and canonical conjugates thereof. Furthermore, $N$ equals 1.   
\end{theorem}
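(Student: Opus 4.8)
The plan is to run a degree-counting elimination argument on the coefficients $f^n_m$ defined by the $p$-power expansion of $\{C,H\}_{(n)}$, exploiting the recursion \eqref{recursion} together with the two structural lemmas already in hand. Since $H$ and $C$ are quadratic in $p$, the canonical transformation indicated in \eqref{ansatz2a} lets me work with the normalized forms \eqref{ansatz2}, so that $\alpha$ and $V$ are polynomials in $q$ and the entire problem reduces to determining the three functions $\alpha$, $V$ and $U$.

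First I would invoke Lemma \ref{lem:vanishing} to produce the maximal surviving power: there is an integer $m_0<N$ with $\deg_{\,C}\,\{C,H\}_{(2N)}=m_0$ and $\deg_{\,p}\,\{C,H\}_{(2N-1)}=2m_0\pm1$. Feeding the corresponding top coefficient into Lemma \ref{lem:U} then constrains $U$ sharply: solvability of $\{C,G\}=F(C)$ forces $U$ to be linear in $q$ in the $2m_0+1$ case and $U\propto(q+c)^{-2}$ in the $2m_0-1$ case. These are the only two admissible shapes for the complexifier potential.

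Next I would combine the explicit degree formula for $\deg_{\,q}\,f^n_m$ read off from \eqref{eqn:f} with the cancellation conditions \eqref{cancel1}. The idea is that each surviving coefficient is simultaneously pinned to a prescribed $q$-degree and forced to vanish by a cancellation identity; matching these two requirements typically yields a \emph{negative} degree, which is impossible and hence kills yet another coefficient. Iterating, I expect to drive the maximal $p$-power down one step at a time, at each stage re-deducing the shape of $U$ and the degrees of $\alpha$ and $V$, until all $f^\ast_m$ with $m>2$ vanish. At that point $\alpha$ must be constant and $U\propto q^{-2}$, which upon undoing the canonical transformation is precisely \eqref{radial_oscillator}. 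The value of $N$ is then pinned down last: once everything above $p^2$ has collapsed the brackets $\{H,C\}_{(2n)}$ for $n\neq0$ can differ only by an overall constant, so $\{H,C\}_{(3)}$ already vanishes and $N=1$.

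The main obstacle I anticipate is the bookkeeping in the iteration. As the $p$-power $2m$ decreases, the number of contributing coefficients $\sharp f^\ast_{2m}=N-m+1$ \emph{grows}, while the number of independent cancellation conditions generated by repeatedly differentiating and applying \eqref{recursion} is only of order $2m-1$, so for small $m$ the conditions no longer suffice to eliminate the unknowns by counting alone. This is exactly where Lemma \ref{lem:U} must be brought in as an extra, non-counting input to break the impasse and force the potential back to one of its two rigid forms; making sure the contradiction genuinely propagates all the way down to $m=2$, rather than stalling at some intermediate power, is the delicate part of the argument.
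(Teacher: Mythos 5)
Your proposal follows essentially the same route as the paper's own argument: normalize via the canonical transformation to \eqref{ansatz2}, use Lemma \ref{lem:vanishing} to locate the maximal surviving power $m_0$, constrain $U$ through Lemma \ref{lem:U}, and then play the $q$-degree formula from \eqref{eqn:f} against the cancellation conditions \eqref{cancel1} to force negative degrees and kill the coefficients $f^{\ast}_m$ one tower at a time down to $m=2$, after which $\alpha$ is constant, $U\propto q^{-2}$, and $\{H,C\}_{(3)}=0$ gives $N=1$. You have also correctly identified the one genuinely delicate point -- that the pure counting argument fails for small $m$ and Lemma \ref{lem:U} must supply the extra rigidity -- which is exactly how the paper's proof proceeds.
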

Note, all the equations used to prove this theorem are directly related to Poisson brackets and for this reason only hold up to canonical transformations. This point will be exploited heavily in the next section to derive a more general construction principle for stable coherent state system.    

\section{Adapted complexifiers for integrable systems}
\label{Adapted_Complexifier}

The preceding investigations have revealed that it is in general very hard to construct a complexifier adapted to the dynamics of a given model. Nevertheless, the derived criteria are form invariant under canonical transformation which opens the possibility to excess a wider class of models than those examined above. A common feature of the harmonic and the radial oscillator is their periodic motion which is why integrable systems that show a quasi-periodic motion seem to be especially promising candidates. This idea is to be elucidated in more detail in the subsequent section. In the first part a generalized construction principle for adapted complexifiers will be worked out using so-called action-angle coordinates and the Hamilton-Jacobi approach. This will then be tested on several examples.  

\subsection{Generalized construction principle}
\label{sec:Generalized_construction}

Throughout this section it will be assumed that the mechanical models in question are not explicitly time-depended and posses only a finite number $\mathfrak{f}$ of degrees of freedom. 

A \emph{first integral of motion}\footnote{The term `first integral' often refers to a global property while `constant of motion' is used in a more local context. Yet, the nomenclature is far from being unique; here the conventions of {\arnold} will be used.} is a $C^1$ function $f$ on the phase space $\phasespace$ that Poisson commutes with the Hamiltonian $H$, i.e $\{H,f\}=0$ on the entire phase space. Two functions $g,f\in C^1(\phasespace)$ are said to be in \emph{involution} if $\{g,f\}=0$ and (functionally) independent on a subset $U\subset\phasespace$ if the one-forms $\dif f$ and $\dif g$ are linearly independent on $U$. If $f$ and $g$ are independent then in particular $\dif f \wedge \dif g $ is non-zero on $U$.  
\begin{definition}[Integrable System]
A system with $\mathfrak{f}$ degrees of freedom is integrable if there exist $\mathfrak{f}$ first integrals $H_j$, $j=1,\cdots,\mathfrak{f}$, in involution that are independent on a dense subset of $\phasespace$.
\end{definition}
The name is motivated by the fact that such systems are integrable by quadratures, that is, they are solvable by a finite number of algebraic operations. This insight goes back to Liouville and was later enlarged by Arnold by the following (see e.g. \arnold):
If the system is integrable then the level sets 
\beq
\label{levelset}
M_{\mathbf{h}}=\{\mathfrak{m}\in\phasespace| H_j(\mathfrak{m})=h_j\}
\eeq
are smooth submanifold that are invariant under the phase flow generated by the Hamiltonian $H$. If $M_{\mathbf{h}}$ is compact and connected then it is diffeomorphic to the torus $T^{\mathfrak{f}}$ and the motion is conditionally periodic. This means that $M_{\mathbf{h}}$ has coordinates $\Theta_j$ which  parametrize the circles $S^1$ in $T^{\mathfrak{f}}$ and which evolve as  
\beq
\frac{\dif \mathbf{\Theta}}{\dif t}=\boldsymbol{\omega}(\mathbf{h})~.
\eeq
For this statement to hold it actually suffices that the first integrals are independent on $M_{\mathbf{h}}$. Furthermore, this angle coordinates can be used to parametrize the phase space (in the neighborhood of the invariant torus). Its conjugated momenta $I_j$ can be found by a canonical transformation that, by definition, leaves the symplectic structure invariant. So
\bq
\sum_{j=1}^{\mathfrak{f}} \dif p_j\wedge \dif q_j=\sum_{j=1}^{\mathfrak{f}} \dif I_j\wedge \dif \Theta_j~.
\eq
The parameters $(\mathbf{I},\mathbf{\Theta})$ are called action-angle coordinates in the literature and are widely used in classical perturbation theory (see e.g. \cite{arnold,Goldstein} for more details). Here,  their simple time-dependence is of interest which is given by 
\beq
\frac{\dif \mathbf{I}}{\dif t}=0 \quad\text{and}\quad
\frac{\dif \mathbf{\Theta}}{\dif t}=\boldsymbol{\omega}(\mathbf{I})~.
\eeq
This immediately shows that the complex parametrization,
\beq
\label{Action_coordinates}
\w_j:=\sqrt{I_j} \ee{i\, \theta_j}\quad\text{and}\quad\overline{\w}_j:=\sqrt{I_j} \ee{-i\, \theta_j}
\eeq
is `stable' in the sense that it obeys
\beq
\frac{\dif \w_j}{\dif t}=\{H, \w_j\}=i\,\omega_j(\mathbf{I})\, \w_j~.
\eeq
In addition, the pair $(\overline{\boldsymbol{\w}},\boldsymbol{\w})$ is canonical conjugated as
\bq
\{\overline{\w}_j, \w_k\}=i\,\delta_{jk}
\quad\text{and}\quad
\{\w_j, \w_k\}=\{\overline{\w}_j, \overline{\w}_k\}=0
\eq
and the polar decomposition of $\w_j$ reads 
\bq
\w_j=\frac{1}{\sqrt{2}}\left(Q_j-i\,P_j\right)
\eq
where 
\bq
Q_j=\sqrt{2\,I_j}\, \cos\theta_j
\quad\text{and}\quad 
P_j=-\sqrt{2\, I_j}\, \sin\theta_j~.
\eq
Since also $(\mathbf{P},\mathbf{Q})$ are canonical conjugated $\w_j$ can be written as a complexifier coordinate,
 \bq
 \w_j=\frac{1}{\sqrt{2}} \ee{-i\Lie_{\hamVF{C}}}\,Q_j~,
 \eq
with  $C=\frac{1}{2} \mathbf{P}\cdot\mathbf{P}$. But note, the parametrization given by \eqref{Action_coordinates} does not satisfy the criteria of theorem \ref{th:stabCCS} since the frequencies\footnote{In general: $\frac{\partial\omega_j}{\partial I_k}=\frac{\partial^2 H}{\partial I_k\,\partial I_j}=\frac{\partial\omega_k}{\partial I_j}$.} $\omega_j$ still depend on $I_j=\w_j \overline{\w}_j$. If the system is non-degenerate, which is the case if $\det\frac{ \partial \omega_j}{\partial I_k}\neq 0$, then the invariant tori are uniquely defined and independent of the initial choice of the coordinates $(\mathbf{\Theta},\mathbf{I})$. This implies that, no matter which action-angle coordinates are selected, the frequencies will always depend on $\mathbf{I}$. In other words \eqref{Action_coordinates} can only give rise to stable states if the system is degenerate. More specifically, if $H=\sum_j \omega_j I_j$ for constant $\omega_j$ then the associated coherent states are stable because $\frac{\partial \omega_j}{\partial I_k}=0$ for all $j,k$. 

Before bothering about degeneracy one has to solve the more practical problem of how to determine action variables in the first place. A very useful tool for that is the Hamilton-Jacobi formalism: Given an Hamiltonian $H$ as a function of $\mathbf{p}$ and $\mathbf{q}$, the goal is to find a function $S(q_1,\cdots,q_{\mathfrak{f}}, h_1,\cdots, h_{\mathfrak{f}})$ with $\det\frac{\partial^2S}{\partial q_j\partial H_k}\neq0$ and $\frac{\partial S}{\partial q_j}= p_j $ such that 
\beq
\label{eqn:H_J}
H(q_1,\cdots,q_{\mathfrak{f}}, \frac{\partial S}{\partial q_1},\cdots, \frac{\partial S}{\partial q_{\mathfrak{f}}})
=K(h_1,\cdots, h_{\mathfrak{f}})~.
\eeq
Note, $S$ generates a canonical transformation since 
\begin{gather}
\nonumber
\dif S=\sum_j (p_j \,\dif q_j+\Theta_j \,\dif h_j)\\
\label{eqn:generator}
\Rightarrow\quad
\sum_j \dif p_j \wedge \dif q_j=\dif h_j\wedge\dif\Theta_j
\end{gather}
with $ \Theta_j:=\frac{\partial S}{\partial h_j}$.
Suppose one can find such a solution $S$ to the first order PDE \eqref{eqn:H_J} then, by \eqref{eqn:generator}, one has also found $\mathfrak{f}$ independent constants of motion\footnote{We here use the term `constants of motion' to indicate that they are in general not globally defined.} $h_j$ in involution. If $\det\frac{\partial^2S}{\partial q_j\partial H_k}\neq0$ as required then $h_j=h_j(\mathbf{p},\mathbf{q})$ can be extracted out of the equation $\frac{\partial S}{\partial q_j}= p_j $ due to the inverse function theorem. Apart form that also the dynamics of $\Theta_j$ simplifies, i.e. 
\bq
\frac{\dif \Theta_j}{\dif t}=\{H,\Theta_j\}=\frac{\partial K}{\partial h_j}~.
\eq
By setting $K(h_1,\cdots, h_{\mathfrak{f}})=E=h_1$ the dynamics becomes especially simple, instead of $\mathfrak{f}$ constants of motion one now has $2\mathfrak{f}-1$ constants. This seems rather odd, notably if it would hold globally, as it effectively reduces the system to a one-dimensional free-particle. Astonishingly, the Hamilton-Jacobi equation \eqref{eqn:H_J} with $K=E$ always has a \emph{local} solution\footnote{That means that a solution exist in a neighborhood of any generic point $\mathfrak{m}\in\phasespace$ on which $H$ is not extremal.} given by the classical action functional (see e.g. \cite{Goldstein, arnold, HilbertCourant}). This demonstrates that in the above construction the global properties are essential. On the other hand, one can also fix $K(h_1,\cdots, h_{\mathfrak{f}})=\sum_j h_j$ locally which proofs the lemma:
\begin{lemma}
\label{local_coordinates}
For each generic point $\mathfrak{m}\in\phasespace$ exist a neighborhood $U_{\mathfrak{m}}$ on which the functions 
\bq
\w_j:=\sqrt{h_j} \ee{i\omega_j\,\Theta_j} 
\quad\text{and}\quad
\overline{\w}_j:=\sqrt{h_j} \ee{-i\omega_j\,\Theta_j} 
\quad j=1,\cdots,\mathfrak{f}
\eq
are well-defined, functionally independent, canonical conjugated ($\{\overline{\w}_j,\w_k\}=i \omega_j \delta_{jk}$) and obey 
\bq
H=\sum_j \w_j\,\overline{\w}_j
\quad\text{and}\quad
 \{H,\w_j\}= i\omega_j\,\w_j~.
\eq
\vspace*{-15pt}
\mbox{}\\
Similarly one can introduce coordinates $Q_j:=\sqrt{\frac{h_j}{2\omega_j}} \cos \omega_j\Theta_j$ and $P_j:=-\sqrt{\frac{h_j}{2\omega_j}} \sin \omega_j\Theta_j$ which are well-defined, functionally independent and canonical conjugated on $U_{\mathfrak{m}}$. In terms of this coordinates $\w_j$ can be rephrased as 
\bq
\w_j=\frac{1}{\sqrt{2\omega_j}}\,\ee{-i\Lie_{\hamVF_C}} Q_j
\quad\text{with}\quad 
C=\frac{1}{2}\,\mathbf{P}^2~.
\eq
\end{lemma}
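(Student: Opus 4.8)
The statement to prove is Lemma \ref{local_coordinates}, which asserts the local existence of a complexifier-adapted coordinate system for an integrable system. The strategy is to build everything on top of a local Hamilton-Jacobi solution with the specific choice $K(h_1,\dots,h_{\mathfrak{f}})=\sum_j h_j$, which the preceding discussion guarantees exists in a neighborhood $U_{\mathfrak{m}}$ of any generic point. The plan is to proceed in three movements: first extract the canonical pair $(\mathbf{h},\boldsymbol{\Theta})$ from the generating function $S$; second verify the algebraic and Poisson-bracket properties of the complex coordinates $\w_j,\overline{\w}_j$; and third re-express $\w_j$ as a complexifier coordinate through the real pair $(P_j,Q_j)$.

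\medskip

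First I would invoke the local solvability of \eqref{eqn:H_J} with $K=\sum_j h_j$ to obtain a function $S(\mathbf{q},\mathbf{h})$ on $U_{\mathfrak{m}}$ satisfying $\partial S/\partial q_j=p_j$ and $\det(\partial^2 S/\partial q_j\partial h_k)\neq 0$. By \eqref{eqn:generator}, $S$ generates a canonical transformation $(\mathbf{p},\mathbf{q})\mapsto(\mathbf{h},\boldsymbol{\Theta})$ with $\Theta_j:=\partial S/\partial h_j$, so that $\sum_j \dif p_j\wedge\dif q_j=\sum_j \dif h_j\wedge\dif\Theta_j$. Hence $(\mathbf{h},\boldsymbol{\Theta})$ are canonically conjugate, the $h_j$ are constants of motion in involution, and the non-degeneracy of the Hessian secures their functional independence via the inverse function theorem, exactly as spelled out above. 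With $K=\sum_j h_j$ one immediately reads off $H=\sum_j h_j$ and $\dot\Theta_j=\partial K/\partial h_j=1$, while $\dot h_j=-\partial K/\partial\Theta_j=0$.

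\medskip

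Second, I would establish the claimed properties of $\w_j:=\sqrt{h_j}\,\ee{i\omega_j\Theta_j}$ and its conjugate by direct computation from $\{h_j,\Theta_k\}=\delta_{jk}$, $\{h_j,h_k\}=0=\{\Theta_j,\Theta_k\}$. Writing $\w_j$ as a function of $(\mathbf{h},\boldsymbol{\Theta})$ and applying the Leibniz rule for the Poisson bracket yields $\{\overline{\w}_j,\w_k\}=i\omega_j\delta_{jk}$ and $\{\w_j,\w_k\}=\{\overline{\w}_j,\overline{\w}_k\}=0$ after the phase factors and the $\sqrt{h_j}$ prefactors are differentiated and recombined; the identity $\w_j\overline{\w}_j=h_j$ gives $H=\sum_j\w_j\overline{\w}_j$, and $\{H,\w_j\}=\sum_k\omega_k\{h_k\cdot 1,\w_j\}$ collapses to $i\omega_j\w_j$ because only the $\Theta_j$-dependence of $\w_j$ survives the bracket with $h_k$. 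Functional independence and well-definedness on $U_{\mathfrak{m}}$ follow from those of $(\mathbf{h},\boldsymbol{\Theta})$ together with $h_j>0$ at a generic point, so the square root and exponential are smooth there.

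\medskip

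Finally, I would define $Q_j:=\sqrt{h_j/(2\omega_j)}\,\cos\omega_j\Theta_j$ and $P_j:=-\sqrt{h_j/(2\omega_j)}\,\sin\omega_j\Theta_j$ and check by the same bracket computation that $\{P_j,Q_k\}=\delta_{jk}$ with the other brackets vanishing, so $(\mathbf{P},\mathbf{Q})$ is again canonical on $U_{\mathfrak{m}}$. A short trigonometric regrouping gives $\w_j=\sqrt{\omega_j/2}\,(Q_j-iP_j/\omega_j)$, which is precisely the form $\ee{-i\Lie_{\hamVF_C}}Q_j$ produced by the complexifier $C=\tfrac12\mathbf{P}^2$, by the same flow argument used for the free-particle complexifier $C=p^2/2$ in \eqref{eqn:compl1}; integrating $\dif Q_j/\dif s=\{C,Q_j\}=P_j$ and continuing to $s=-i$ reproduces the stated expression up to the normalization $1/\sqrt{2\omega_j}$. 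The only genuinely delicate point, and the one I would flag as the main obstacle, is the \emph{local} qualifier: all of this lives only on $U_{\mathfrak{m}}$ because the Hamilton-Jacobi solution with $K=\sum_j h_j$ is purely local (a global version would absurdly reduce the system to a free particle, as noted above), and because the angle $\Theta_j$ is multivalued around the invariant torus, so the smooth branch of $\sqrt{h_j}\,\ee{i\omega_j\Theta_j}$ and the positivity $h_j>0$ must be arranged within a single chart rather than globally.
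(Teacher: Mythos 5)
Your proposal is correct and takes essentially the same route as the paper: the paper's entire ``proof'' of Lemma \ref{local_coordinates} is the one-line remark that one can fix $K(h_1,\cdots,h_{\mathfrak{f}})=\sum_j h_j$ locally in the Hamilton--Jacobi equation \eqref{eqn:H_J}, after which the Poisson-bracket verifications you spell out are exactly the computations already carried out for the action-angle parametrization $\w_j=\sqrt{I_j}\,\ee{i\theta_j}$ earlier in section \ref{sec:Generalized_construction}. The only blemishes are normalization factors in the final $(P_j,Q_j)$ step (with the stated prefactor $\sqrt{h_j/(2\omega_j)}$ one gets $\{P_j,Q_j\}=1/4$ rather than $1$, and $\ee{-i\Lie_{\hamVF_C}}Q_j=Q_j-iP_j$ reproduces $\w_j$ only up to a constant), but these are inherited from the lemma's own statement rather than defects of your argument.
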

As already mentioned the $\w_j$'s will in general not define good coordinates on $\phasespace$ but can only be defined locally. Below it will be demonstrated on the simplest example of a free particle what happens if the global properties are ignored (see section \ref{sec:free_particle}). In contrast to that the complex parametrization defined through proper action-angle coordinates is at least well defined in the neighborhood of the whole torus.\footnote{In fact, this parametrization only breaks down at separatrices where $M_{\mathbf{h}}$ even ceases to be a manifold. Such separatrices divide the phase space into several regions on which the level sets may have different properties. An example for that is the mathematical pendulum where the phase space divides into an oscillatory and a rotationally branch (see e.g. \arnold).}

In order to understand better what are the necessary criteria for the existence of well-defined parametrizations $\w_j$ it is a good idea to investigate further the relation between the $\w$-variables and those obtained by the complexifier method. The question is whether $z_j=\ee{i\Lie_{\hamVF_C}} q_j $ can be obtained by lemma \ref{local_coordinates} given that $\mathbf{z}$ define good coordinates, i.e. they are everywhere defined and functionally independent, and given that 
\beq
\label{stable}
\{H,z_j\}=i f_j(\mathbf{z})~.
\eeq
In general, $z_j$ will \emph{not} be equal to $\w_j$ since  $\w_j$ is canonically conjugated to its complex conjugate $\overline{\w}_j$ but
\bq
\{\overline{z}_j,z_k\}=\{\ee{i\Lie_{\hamVF_C}} q_j,\ee{-i\Lie_{\hamVF_C}} q_k\}
=\ee{i\Lie_{\hamVF_C}} \{q_j,\ee{-2i\Lie_{\hamVF_C}} q_k\}
\eq
is generically not even constant (see section \ref{sec:first_solution} for an example). The momenta conjugated to $z_j$ and $\overline{z}_j$ are $\Pi_j=\ee{-i\Lie_{\hamVF_C}} (p_j + u_j(q_j))$ and $\Pi_j=\ee{i\Lie_{\hamVF_C}} (i\,p_j + u_j(q_j))$ where $u_j$ is any function that should only depend on $q_j$ to ensure that the momenta are Poisson commuting. Note, that the $u_j$ are not arbitrary but, due to the stability criterion \eqref{eqn:classco}, must be such that   
\beq
\label{reality1}
H=\sum_j \left(i\,f_j(\mathbf{z}) \,\Pi_j+ g_j(\mathbf{z})\right)=\sum_j \left(-i\,\overline{f}_j(\overline{\mathbf{z}}) \,\overline{\Pi}_j+ \overline{g}(\overline{\mathbf{z}})\right)
\eeq 
with $g_j(\mathbf{q})=\frac{1}{\mathfrak{f}}\, g(\mathbf{q})- u_j(q_j)$. In general the functions $f_j,\, u_j$ and $g_j$ can depend on complex parameters, which is why $\overline{f_j(\mathbf{z})}=\overline{f}_j(\overline{\mathbf{z}})$.

Even though the $z$ and $\w$ parametrization are generically different they are still closely related. To see this let us first derive a set of $\mathfrak{f}$ constants of motion of \eqref{reality1}. A  function $H_k$ is a constant of motion of \eqref{reality1} iff 
\bq
\{\ee{i\Lie_{\hamVF_C}} H, \ee{i\Lie_{\hamVF_C}} H_k\}=
\{\sum_j f_j(\mathbf{q})\, p^j+g(\mathbf{q}), F_k\}\overset{!}{=}0
\eq
where $F_k:=\ee{i\Lie_{\hamVF_C}} H_k$. Additionally, the maps $F_k$ must be in involution which suggests the ansatz $F_1:=\ee{i\Lie_{\hamVF_C}} H$ and $F_j=F_j(\mathbf{q})$ for $j=2,\cdots,\mathfrak{f}$. This yields a system of $\mathfrak{f}-1$ linear, first order partial differential equations (PDE)
\beq
0\overset{!}{=} \sum_j f_j(\mathbf{q})\,\frac{\partial F_k}{\partial q_j}
\eeq
that can be solved locally by the method of characteristics (see appendix \ref{app:B}). A local solution of such an equation depends on a set of $\mathfrak{f}$ arbitrary initial functions so that it should in general not be problematic to find $\mathfrak{f}-1$ functionally independent ones at least for a sufficiently small neighborhood. Yet in practice, it can be very hard to actually determine them and it might be easier to use the Hamilton-Jacobi formalism. In the following, let 
\beq
F(\mathbf{p},\mathbf{q}):=\ee{i\Lie_{\hamVF_C}} H= \sum_j (i\, f_j(\mathbf{q})\, p^j + g_j(\mathbf{q}))
\eeq
and
\beq
\overline{F}(\mathbf{p},\mathbf{q}):=\ee{-i\Lie_{\hamVF_C}} H= \sum_j(- i\, \overline{f}_j(\mathbf{q})\, p^j + \overline{g}_j(\mathbf{q}))~.
\eeq
As before, we want to find maps $S(q_1,\cdots,q_{\mathfrak{f}}, F_1,\cdots, F_{\mathfrak{f}})$ and $\tilde{S}(q_1,\cdots,q_{\mathfrak{f}}, \overline{F}_1,\cdots, \overline{F}_{\mathfrak{f}})$ that satisfy $\frac{\partial S}{\partial q_j}=\frac{\partial \tilde{S}}{\partial q_j}=p_j$,
\begin{align}
\label{eqn:H_J2}
F(q_1,\cdots,q_{\mathfrak{f}}, \frac{\partial S}{\partial q_1},\cdots, \frac{\partial S}{\partial q_{\mathfrak{f}}})
=\sum_j F_j
\quad\text{and}\quad
\overline{F}(q_1,\cdots,q_{\mathfrak{f}}, \frac{\partial \tilde{S}}{\partial q_1},\cdots, \frac{\partial \tilde{S}}{\partial q_{\mathfrak{f}}})
=\sum_j \overline{F}_j~. 
\end{align}
If $\det\frac{\partial^2 S}{\partial q_j\partial F_k}\neq 0$ and $\det\frac{\partial^2 \tilde{S}}{\partial q_j\partial \overline{F}_k}\neq 0$ then $S$ and $\tilde{S}$ are generators of canonical transformations that map $(\mathbf{p},\mathbf{q})$ on the conjugated pairs $(\mathbf{F},\mathbf{\Phi})$ and $(\mathbf{\overline{F}},\boldsymbol{\overline{\phi}})$ respectively where $\frac{\partial S}{\partial F_j}:=\Phi_j$ and $\frac{\partial \tilde{S}}{\partial \overline{F}_j}:=\overline{\phi}_j$. In these variables the reality condition \eqref{reality1} can be replaced by 
\beq
\label{reality2}
\ee{-i\Lie_{\hamVF_C}} F_j=i\,f_j(\mathbf{z}) \,\Pi_j+ g_j(\mathbf{z})=-i\,\overline{f}_j(\overline{\mathbf{z}}) \,\Pi_j+ \overline{g}_j(\overline{\mathbf{z}})=\ee{i\Lie_{\hamVF_C}} \overline{F}_j:=h_j~.
\eeq
Together with the angles 
\beq
\label{new_angle}
\Theta_j:=\frac{1}{2}\left(\ee{-i\Lie_{\hamVF_C}} \Phi_j +\ee{i\Lie_{\hamVF_C}} \overline{\phi}_j\right)
\eeq
the action variables $h_j$ build a new canonical pair $(\mathbf{h},\mathbf{\Theta})$ because
\begin{align*}
\{h_j,\Theta_k\}=\frac{1}{2}\left[ \ee{-i\Lie_{\hamVF_C}} \{F_j, \Phi_k\}
+ \ee{i\Lie_{\hamVF_C}} \{\overline{F}_j, \overline{\phi}_k\}\right]=\delta_{jk}~,
\end{align*}
which can be used to construct a $\w$-parametrization as in lemma \ref{local_coordinates}. Since 
 $\Phi_j$ and $\overline{\phi}_j$ only depend on $\mathbf{q}$ and $\mathbf{F}$ and $\overline{\mathbf{F}}$  the new angle coordinates are of the form
\beq
\label{theta}
\Theta_j=\frac{1}{2}\left[\Phi_j(\mathbf{z},\mathbf{h})+\overline{\phi}_j(\overline{\mathbf{z}},\mathbf{h})\right]~.
\eeq
It still remains to check whether $\w_j=\sqrt{h_j}\,\ee{i\omega_j\,\Theta_j}$ are good coordinates that are everywhere defined and independent. For this it is unavoidable to solve the Hamilton-Jacobi equations \eqref{eqn:H_J2}. This is particularly easy if the Hamilton-Jacobi equation is completely separable, that is,  if $F$ is such that $F=\sum_j F_j(q_j,\frac{\partial S_1}{\partial q_j})$. In this case, $S=\sum_j S_j(q_j,F_j)$ and $\tilde{S}=\sum_j \tilde{S}_j(q_j,\overline{F}_j)$ with 
\begin{align}
\label{ansatzHJ}
\begin{split}
S_j(q,F)&=c_j(F)+\int_{q_0}^q\! \dif x\;\frac{F_j-g_j(x)}{i f_j(x)}\\ 
\tilde{S}_j(q,\overline{F})&=\tilde{c}_j(\overline{F_j})-\int_{q_0}^q\! \dif x\;\frac{\overline{F_j}-\overline{g_j}(x)}{i \overline{f_j}(x)}
\end{split}
\end{align}
are solutions of \eqref{eqn:H_J2}. 
Here, $c_j$ and $\tilde{c}_j$ are arbitrary functions which will be set to zero in the following. To simplify the notation also the label $j$ will be left away. From ansatz \eqref{ansatzHJ} one can deduce that 
\beq
\Phi(q)=-i\int_{q_0}^q\dif x\,\frac{1}{f(x)} 
\quad\text{and}\quad
\overline{\phi}(q)=\overline{\Phi}(q)=i\int_{q_0}^q\dif x\,\frac{1}{\overline{f}(x)}
\eeq
which yields 
\bq
\Theta= -\frac{i}{2}\left[\int_{z_0}^z\dif x\,\frac{1}{f(x)} -\int_{\overline{z}_0}^{\overline{z}}\dif x\,\frac{1}{f(x)} \right]~.
\eq
Thus, 
\beq
\label{difTheta}
\dif \Theta =\frac{-i}{2}\left( f(z)^{-1}\dif z- \overline{f(z)}^{\,-1}\dif \overline{z}\right)
\eeq
is well-defined for all points $\mathfrak{m}\in\phasespace$ on which $f(\mathfrak{m})=f(q)\neq0$. Moreover, since the Hamiltonian is real it will in most cases be a function of $z\overline{z}$ so that 
\beq
\label{wedgew}
\dif \w\wedge\dif \overline{\w}=\frac{i\omega}{2}\dif\Theta\wedge\dif h
=\frac{\omega}{4}\left(f(z)^{-1}\,\frac{\partial h}{\partial \overline{z}}+\overline{f(z)}^{\,-1}\,\frac{\partial h}{\partial z}\right)\dif z\wedge\dif \overline{z}
\eeq
is most likely non-vanishing and $\w=\sqrt{h}\ee{i\omega \Theta}$ will provide a good parametrization if $z$ does. 

The last thing which remains to check is whether the level sets \eqref{levelset} are compact, that is, whether $(\mathbf{\Theta},\mathbf{h})$ are proper action variables. Note, that if $\Theta$ is not a proper angle then $\w$ is multivalued and therefore the parametrization will break down on the points where $\Theta(\mathfrak{m}')=\Theta(\mathfrak{m})+n\pi$. Concluding, if \eqref{difTheta} and \eqref{wedgew} are well-defined nowhere vanishing differential forms then $\mathbf{\Theta}$ is most likely a parametrization of an invariant torus (for a counter example see section \ref{sec:free_particle}).

Even though the above considerations do not prove that integrability and compactness of the level sets is a necessary condition for stable coherent states it nevertheless uncovers a close relation between existence of (global) action-angle coordinates and those states. In particular, one can only hope to find a good global $z$-parametrization if the system is integrable, has compact level sets and a degenerate dynamics ($\det\frac{ \partial^2 H}{\partial I_j \partial I_k}=0$).

\subsection{Examples}
\label{sec:examples}

To illustrate the construction principle described in section \ref{sec:Generalized_construction} and to emphasize its advantages and drawbacks, some examples will be discussed. 

\subsubsection{Radial oscillator}

Apart from the harmonic oscillator the radial oscillator with Hamiltonian 
\bq
H=\frac{p^2}{2}+ \frac{\omega^2}{2} q^2+\lambda q^{-2}
\eq
and phase space $\phasespace=\{(p,q)\in\R^2|q\neq 0\}$ was the only one-dimensional model with stable coherent states found in section \ref{sec:first_solution}. To obtain stable states one has to consider the complexifier
\bq
C=\frac{1}{\omega}\left(\frac{p^2}{2}+\lambda q^{-2}\right)
\eq
for which 
\bq
\ee{i\Lie_{\hamVF_C}} H= i\,\omega \,p\,q+ g(q):=F
\eq
and 
\bq
z:=\ee{-i\Lie_{\hamVF_C}} q=\left[\left(q-\frac{i}{\omega}\,p\right)^2-2\frac{\lambda}{\omega^2} q^{-2}\right]^{\frac{1}{2}}~.
\eq
Starting with that, it is possible to construct a $\w$-parametrization displaying all the properties mentioned in lemma \ref{local_coordinates} by applying the Hamilton-Jacobi method with generating function  
\bq
S=-i\int_{q_0}^q \dif x\, \frac{1}{\omega x}\left(F-g(x)\right)~.
\eq
One easily verifies $\frac{\partial^2 S}{\partial q\,\partial F}=-\frac{i}{\omega q}\neq0 $ for all $\mathfrak{m}\in\phasespace$, $\frac{\partial S}{\partial q}= p$ and $F(q,\frac{\partial S}{\partial q})=F$. The angles, which are canonically conjugated to $F$ and $\overline{F}$, are 
\beq
\label{eqn:Int1}
\Phi=\frac{\partial S}{\partial F} = -\frac{i}{\omega} \ln x
\quad
\text{and}
\quad
\overline{\Phi}=\frac{\partial \overline{S}}{\partial\overline{F}} = \frac{i}{\omega} \ln x 
\eeq
and the angle conjugated to $h$ is 
\bq
\Theta:= - \frac{i}{2\omega}\left(\ln z-\ln \overline{z}\right)
\eq
This shows that the $\w$-parametrization is given by
\bq
\w:=\sqrt{h} \ee{i\omega \phi} 
=\sqrt{\frac{h\,z}{\overline{z}}}
= \sqrt{\frac{h}{\overline{z}\,z} }\,z=\sqrt{\frac{\omega\, h/2}{\sqrt{h^2/\omega^2-2\lambda}}}\; z
\eq
where the last equality follows from  
\begin{align*}
z\,\overline{z}=\left[\left(q^2-\frac{p^2}{\omega^2}\right)^2+4\frac{\lambda}{\omega^2}
\left(\frac{\lambda}{\omega^2}\, q^{-4} +\frac{p^2 q^2}{\omega^2}-1\right)\right]^{\frac{1}{2}}
=\frac{2}{\omega} \left[\frac{h^2}{\omega^2}-2\lambda\right]^{\frac{1}{2}}~.
\end{align*}
Of course, the functional dependence of $\w$ and $\overline{\w}$ on the original parametrization $(p,q)$ is much more complicated than for $z$ and $\overline{z}$. However, their algebraic properties are nicer. By construction $\{\overline{\w},\w\}=i\omega$, while for $z$ and $\overline{z}$ holds
\bq
\{\overline{z},z\}=\{\sqrt{\frac{|z|^2}{h}}\,\overline{\w},\sqrt{\frac{|z|^2}{h}}\,\w\} =\frac{4\,i}{\omega^3 |z|^2} H~.
\eq
Since the Hamiltonian is strictly positive on the phase space also $\frac{h^2}{\omega^2}-2\lambda$ must be grater than zero. Because of that and since $z$ and $\overline{z}$ are well-defined and independent, 
$\w$ and $\overline{\w}$ are everywhere defined and independent. Besides that, the level sets are compact.

The complex parametrization $\w$ can be as well brought into the complexifier form defining
\bq
Q:=\sqrt{\frac{ h/4}{\sqrt{h^2/\omega^2-2\lambda}}}\;\Re(z)
\quad\text{and}\quad
P:=-\sqrt{\frac{ h/4}{\sqrt{h^2/\omega^2-2\lambda}}}\;\Im(z)
\eq
and $C=P^2/2$. Then $w= \frac{1}{\sqrt{2 \omega}}\ee{-i\Lie_{\hamVF_C}} Q=  \frac{1}{\sqrt{2 \omega}} (Q-iP)$. 
\\[10pt]
Instead of starting with the $z$-parametrization it is equally well allowed to construct directly the action-angle coordinates of the model. With the generating function 
\bq
\tilde{S}(q,h)=\int_{q_0}^q\! \dif x\; \sqrt{2(h-V(x))}
\eq
the Hamilton-Jacobi approach leads to $H(q,\frac{\partial S}{\partial q})=h $, $\frac{\partial S}{\partial q}=p$ and 
\begin{align}
\label{eqn:Int2}
\tilde{\Phi}:=&\frac{\partial\tilde{S}}{\partial h}= \int_{q_0}^q \frac{\dif x} {\sqrt{2(h-V(x))}}
= \frac{1}{2\omega } \arcsin\frac{y}{\Omega}
\end{align}
where $\Omega^2=h^2/\omega^2- 2\lambda$, $y=\omega q^2-h/\omega$ and $q_0$ is fixed such that $y_0=1$ for simplicity. Since $\arcsin x=-i\ln(ix+\sqrt{1-x^2})$, the $\w$-parametrization, which one obtains from this ansatz, is equal to the above up to a complex phase depending on the value of $q_0$. Here, $\tilde{\w}=\ee{i\pi/4} \w$. Yet, the integration of \eqref{eqn:Int2} is more involved than that of \eqref{eqn:Int1}

\subsubsection{Free particle}
\label{sec:free_particle} 

The free particle with Hamiltonian $H=\frac{p^2}{2}$ is a good example for what can go wrong if the global properties are ignored. Locally the Hamilton-Jacobi equation, $H(q,\frac{\partial S}{\partial q})=h$, is solved by the generating function  
\bq
S(q,h)=\sqrt{2\,h} \;q 
\eq
with conjugated `angle' 
\bq
\Theta:=\frac{\partial S}{\partial h}= \frac{q}{\sqrt{2 h}}=\frac{q}{p}
\eq
which is only well-defined for $p\neq 0$. This yields
\bq
\w:=\sqrt{h} \ee{i\omega \Theta}= p(\cos q/p+i\sin \omega p/q)=\frac{1}{\sqrt{2\omega}}\ee{-i\Lie_{\hamVF_C}} Q
\eq
where $Q=\frac{1}{\sqrt{2\,\omega}}\,p\cos \frac{q}{p}$, $P= -\frac{1}{\sqrt{2\,\omega}}\,p\sin\frac{q}{p}$, and $C=P^2/2$. Locally, the differentials $\dif \w$ and $\dif\overline{\w}$ are well-defined and 
\bq
\dif p\wedge\dif q=\dif h\wedge\dif \Theta=\dif P\wedge\dif Q=\frac{1}{i \omega} \dif\w\wedge\dif\overline{\w}
\eq
is non-degenerate. But it is not possible to extend that to all of $\mathcal{M}$ since, firstly, the whole parametrization is ill-defined for $p=0$ and, secondly, the coordinates $(\w,\overline{\w}, P,Q)$ are multivalued, that is, $\w(q,p)=\w(q',p)$ for $\omega\frac{q}{p}= 2\pi n + \omega \frac{q'}{p}$ and $n\in\N$. Even more severe, the motion of $\w$ seems to be periodic:
 \bq
 \w(t)=\w(p_0,q(t))=p_0[\cos(\omega t+q_0/p_0)+i\sin(\omega t+q_0/p_0)]~.
 \eq
 All these problems arise from $\Theta$ not being a proper angle since the level sets $h=\mathrm{const.}$ are not compact. On the other hand, for $\omega<<q/p$ the parametrization is not to bad since up to second order
 \bq
 P\approx -p+{\cal O}(\omega^2)
 \quad\text{and}\quad
 Q=-\omega q+{\cal O}(\omega^3)~.
 \eq
Another interesting property of $\w$ is that on the level-sets, i.e. for constant momenta, $\w$ are the generators of the $\ast$-algebra of quasi-periodic functions\footnote{This are functions $f$ of the form $f(x)=\sum_{j\in{\cal I}} f_j \ee{i k_j x}$ for some finite label set $\mathcal{I}$ and $k_j\in\R$. Here, $k_j=\omega_j/p_0$.}. The spectrum of this algebra constitutes the so-called Bohr compactification that lies at the heart of Loop Quantum Cosmology (see e.g. \LQCreview).

\subsubsection{Anharmonic oscillators}

For a generic one-dimensional Hamiltonian of the form $H:=\frac{p^2}{2}+V(q)$ one can always solve the Hamilton-Jacobi equation locally by choosing
\beq
\label{Allgemeiner_Anstaz}
S=\int^q_{q_0}\dif x \sqrt{2(h-V(x))}~.
\eeq
To conclude, we will discuss two examples of an anharmonic oscillator that are widely studied in the literature: the P\"oschl-Teller potential \cite{Poeschel}
\bq
V_{P.T.}(q)=-\frac{\lambda(\lambda+1)}{2}\, (\cosh(q))^{-2}~,
\eq
which is an effective potential to describe vibrations in diatomic molecules and for which the Schr\"odinger equation is solvable, and the quartic potential 
\bq
V_4(q)=\frac{\omega^2}{2} q^2+\frac{\lambda^2}{2} q^4~,
\eq
which is a standard example in perturbation theory. 

For large $q$ the P\"oschel-Teller potential is exponentially suppressed since for large q  $\cosh(q)\approx e^{|q|}/2$ and hence the level sets are compact. Moreover, a local $\w$-parametrization is conceivable. Ansatz \eqref{Allgemeiner_Anstaz} yields
\begin{align*}
\Theta:=\frac{\partial S}{\partial h}&=\int_{0}^q \frac{\dif x\; \cosh x}{\sqrt{2h \cosh^2 x+\lambda(\lambda+1)}}\\
=&(2h )^{-1/2}\int_{0}^q \frac{\dif (\sinh x)}{\sqrt{\sinh^2 x+\Omega^2}}\\
=& (2h )^{-1/2} \;\sinh^{-1}\left(\frac{\sinh q}{\Omega}\right)~,
\end{align*}
where $\Omega^2=\frac{\lambda(\lambda+1)}{2h}+1$, and 
\bq
\w=\sqrt{h} \ee{i\omega \Theta}=\frac{\sqrt{h}}{\Omega}\left[\sinh q+\sqrt{\Omega^2+(\sinh q)^2}\right]^{\frac{i}{\sqrt{2h}}}~.
\eq
In the last equation the identity $\sinh^{-1} q=\ln[q+\sqrt{1+q^2}]$ was used. The corresponding $P,Q$-variables are even more complicated so that the quantization via the complexifier method can cause severe problems.

While the motion for the quartic potential is also bounded, the angle $\Theta$ is even harder to determine. In fact,
\bq
\Theta:=\int_{q_0}^q \frac{\dif x}{\sqrt{2h-\omega^2 x^2-\lambda^2 x^4}}
\eq
is an incomplete elliptic integral of first kind\footnote{For more information see e.g. \cite{Abramowitz:mf1970, groebener}} that \emph{is not} solvable in terms of elementary functions. By replacing $x=-b\cos\phi$ and
 \bq
 2h-\omega^2 x^2-\lambda^2 x^4=\lambda^2(x^2+a^2)(b^2-x^2),
 \eq
where $a^2-b^2=(\omega/\lambda)^2$ and $b^2 \omega^2+b^4\lambda^2=2h$, this integral can be brought into the so-called Legendre canonical form 
 \beq
 \Theta=\frac{1}{\lambda}\int_{q_0}^q \frac{\dif x}{\sqrt{(x^2+a^2)(b^2-x^2)}}
 =\frac{k}{\lambda b} F(\phi,k) +c~.
 \eeq
Here, $k^2=\frac{b^2}{b^2+a^2}$, $c$ is a constant and 
 \bq 
 F(\phi,k)=\int_0^{\sin \phi}\frac{\dif \xi}{\sqrt{(1-\xi^2)(1-k^2\xi^2)}}
 =\int_0^{ \phi}\frac{\dif \psi}{\sqrt{1-k^2 \sin^2\psi}}~. 
 \eq
A literal quantization of the corresponding complexifier seems hopeless. On the other hand the quartic anharmonic oscillator can be treated, classically as well as quantum mechanically, by a perturbation theory. Especially the classical perturbation theory makes heavy use of action-angle coordinates of the unperturbed harmonic oscillator. It therefore seems much more promising to study the $\w$-parametrizations in this context, however, this would go beyond the scope of this paper.  
 
\section{Discussion}

It was found that a necessary criterion for the existence of stable coherent states is that the classical evolution of the variable $\mathbf{z}=\ee{-i\Lie_{\hamVF_C}} \mathbf{q}$ depends only on $\mathbf{z}$ itself and not on its complex conjugate. However, it is not possible to determine other models than the harmonic and radial oscillators from ansatz \eqref{ansatz2a} that allow for such parametrizations. This issue was circumvented by invoking the Hamilton-Jacobi formalism. In doing so, one can, indeed,  construct local complex coordinates for almost any system that display the desired properties and are related to the complexifier formalism. In general, these can, however, not be extended globally what was explained exemplarily for the free particle. Another issue arises from the fact that the resulting complexifiers and parametrizations are in general not analytic functions  of the `old' variables $(p,q)$ (see section \ref{sec:examples}) that potentially causes severe problems when quantizing. 
\\

Let us speculate a bit more about the last point: One essential ingredient for quantization is the choice of a polarization. Very roughly, the polarization says how to split the classical phase space $\phasespace$ of dimension $2\,\mathfrak{f}$ into an $\mathfrak{f}$-dimensional submanifold ${\cal P}$ whose elements are then represented as multiplication operators on an appropriate Hilbert space. Of course, for $\phasespace=\cotb{\config}$ the natural choice of ${\cal P}$ is the configuration space $\config$, which corresponds  to the usual Schr\"odinger quantization. In the case of the harmonic oscillator we already mentioned implicitly another possible polarization, namely, that defined by the complex parameter $a=q-i\,p$. This leads to the Hilbert space $\mathfrak{H}^2(\C,\dif\nu)$ of holomorphic square integrable function on which the annihilator $\hat{a}$ acts by multiplication. The resulting quantum theory is unitary equivalent to the Schr\"odinger quantization and the transformation between the two representations is given by the Segal-Bargann transform \eqref{eqn:S-B_transform}. This close relation between coherent states and complex polarizations is much more generic and often used in geometric quantization.   

As was mentioned in section \ref{sec:CCS}, the complexifier approach follows exactly the above line of thoughts: It intends to provide a link between an arbitrarily chosen complex K\"ahler structure and a quantization on the corresponding polarizations. This relation was examined in great detail in \cite{HallKirwin1,Kirwin1, Kirwin2,HallKirwin2} focussing especially on unitary equivalence of the resulting quantum theories. An interesting result of \cite{Kirwin1} is that for compact Lie groups $G$ with phase space $\cotb{G}$ only complexifiers of the form $\hat{C}=\frac{\Delta}{2}$ give rise to a quantization that is unitary equivalent to the Schr\"odinger one with Hilbert space $L^2(G,\dif x)$. Here, $\Delta$ is the Laplacian of the group and $\dif x$ is the Haar measure. But formally $\Delta$ can be identified with the square $\hat{p}^2$ of the original momentum operators while the complexifiers obtained for the various examples in section \ref{sec:examples} show a much more complicated functional dependence on $p$. Thus, one should not expect that a quantum model obtained by directly quantizing the systems in the new $\w$-parametrization is equivalent to the Schr\"odinger representation. Even though it is, of course, tempting to try the first due to the non-analyticity of $C$, which one often encounters. 

Apart from the complicated structure of the complexifier it is also questionable whether the above strategy can be applied if the $\w$-parametrization is ill-defined. All of these are interesting questions which deserve a further investigation but go beyond the scope of this paper. 
\\

Remember, the original motivation for investigating the stability of coherent states originated from the heuristic implementation of the Hamiltonian constraint in LQG by the formal expression \eqref{eqn:stab1}. Yet, the formalism developed in section \ref{Adapted_Complexifier} can blow up already for `simple' one-dimensional models. The technical obstacles encountered in the finite dimensional examples of this paper are of course even more challenging in QFT ?s such as LQG. On the other hand, the formalism 
developed in this paper in principle directly applies to symmetry reduced models of LQG such as LQC. In fact there are two observations which support the assumption that the above formalism is conceivable in LQC: The first hint is the appearance of the Bohr compactification, which plays a crucial role in the quantization of LQC, in the stable state system of the free particle. The second hint is more vague, namely, the stable parametrizations $z,\overline{z}$ of the radial oscillator form an $\mathfrak{sl}(2,\R)$ algebra together with the Hamiltonian $H$. But this algebra also shows up in the modified dynamics of deparametrized, homogeneous and isotropic models.

{\acknowledgments
 T.T. thanks Jos\'e Mour\~ao, Joao Nunes and William Kirwin for several fruitful discussions.
This project was supported in part by funds from the Emerging Field Initiative of the Friedrich-Alexander-University Erlangen N\"urnberg to the Emerging Field Project ``Quantum Geometry. A.Z. acknowledges financial support by the grant of Polish Narodowe Centrum Nauki 
 nr 2012/05/E/ST2/03308.
}

\newpage
\appendix
\section{Stable complexifier states for the radial oscillator}
\label{app:A}
\label{sec:radOsc}

\subsection{Classical and Quantum Properties of the Radial Oscillator}
\label{ssec:Class/QuantProp}

Note that the potential of the radial oscillator,
\begin{align*}
V(q)=\frac{m\omega^2}{2}q^2+\lambda q^{-2}~,
\end{align*}
with mass $m$, frequency $\omega$ and coupling constant $\lambda$ diverges in the limit $q\to 0$. Therefore, the classical motion is confined to the positive (or negative) axes. The equation of motion can be solved, for example, by integrating the law of conversation of energy via a separation of variables. This yields
\begin{align*}
q(t)=\pm\left[\frac{E}{m\omega^2}+\frac{\sqrt{\gamma}}{\sqrt{m}\omega} \sin{(2\sqrt{m}\omega t+\phi)}\right]^{\frac{1}{2}}
\end{align*} 
where $\gamma=\frac{E^2}{\sqrt{m}\omega}^2-2\lambda$ and the phase $\phi$ is determined by the initial condition. 

From the form of the potential, one expects the quantized system to be discrete. In the limit $\lambda\to0$, the solutions should approach the solutions of the harmonic oscillator. Due to the barrier at $q=0$, only the odd solutions, i.e. those solutions which have a knot at zero, are allowed. For `big' $\lambda$ the solutions on the positive and negative axis should decouple. This can be checked explicitly by replacing the peak at $q=0$ through a box potential of width $d=2\sqrt{\frac{\lambda m}{\omega^2}}$ and height $V_0$. The transmission coefficient $T$ for such a system with energy $E$ is then given by 
\begin{align*}
T=\left(1+\frac{V_0^2\operatorname{sinh}^2(\sqrt{2(V_0-E)}d)}{4E(V_0-E)}\right)^{-1}~,
\end{align*}
which vanishes for $V_0\to\infty$. Therefore, also the quantum system can be restricted to the positive axis. The natural choice for a Hilbert space is the space $L^2(\mathbb{R}^+,\dif q)$ of square intgrable functions with respect to the Lebesgue measure $\dif{q}$ where $\hat{q}$ and $\hat{p}$ are represented as multiplication and derivative operator respectively. 

The corresponding eigenvalue equation,
\begin{gather*}
\hat{H}\,\psi(x)=E\,\psi(x)\nonumber\\
\implies \psi''(x)-\left(y^2+\frac{\alpha^2-1}{4}+2\epsilon\right)\psi(x)=0
\end{gather*}
with $x:=\sqrt{\frac{m\omega}{\hbar}}$, $\epsilon:=\frac{E}{\omega\hbar}$ and $\alpha^2:=8 m \lambda+1$ is a modified Whittaker equation (see \cite{Abramowitz:mf1970}, p. 505) whose solutions are given by a hypergeometric function of the first kind $M$, that is, 
\begin{align*}
\psi(x)= N\ee{-\frac{x^2}{2}}x^{\frac{1+\alpha}{2}}\,M\!\left[\frac{1}{4}(2+\alpha-2\epsilon),\frac{2+\alpha}{2},x^2\right]
\end{align*}
where $N$ is a normalization constant. To ensure that $\psi$ is an element of the Hilbert space, the following conditions must be satisfied:
\begin{itemize}
	\item $\lim_{x\to0}\psi(x)<\infty$ requires $\alpha\geq-1$. Otherwise $\psi$ can not be square-integrable.
	\item In order that the Hamiltonian is well-defined, i.e. $\|\hat{H}\psi\|<\infty$, $\psi$ needs to be at least once continuously differentiable for $x\to0$ (and twice everywhere else). This requires $\alpha^2\geq1\iff\lambda\geq0$.
	\item $\psi$ is square-integrable iff $\frac{1}{4}(2+\alpha-2\epsilon)=-n,\;n\in\mathbb{N}$. Thus the energy spectrum is discrete and equidistant
	\begin{align}
	\label{eqn:eigenvro}
	E_n=\omega\hbar(2n+1+\frac{\alpha}{2})
	\end{align}
In the limit $\lambda\to0$ the eigenvalues approach the odd energy levels of the harmonic oscillator as expected and $M$ reduces to an Hermite polynomial (\cite{Abramowitz:mf1970}, p. 505). 
\end{itemize}
For these specific values the hypergeometric function $M$ becomes proportional to a generalized Laguerre polynomial $\laguerre{n}{\frac{\alpha}{2}}{x}$, $n\in\N$, which build a complete orthogonal system in $L^2(\R^+,y^{\frac{\alpha}{2}}\ee{-y} \dif y)$. Thus, the normalized eigenstates are given by 
\begin{align}
\label{eqn:ROeigen}
\psi_n(x)=\sqrt{\frac{2 n!}{\Gamma\left(\frac{\alpha}{2}+1+n\right)}}\ee{-\frac{x^2}{2}}x^{\frac{\alpha+1}{2}}\laguerre{n}{\alpha/2}{x^2}~.
\end{align}

\noindent{\bf Remark:} Since the eigenvalue equation for the radial oscillator is a second order partial differential equation there exist of course two independent solutions. Here the second linear independent solution is
\begin{align}
\label{eqn:secdSolut}
\tilde{\psi}(x)= N\ee{-\frac{x^2}{2}}x^{\frac{1-\alpha}{2}}\,M\!\left[\frac{1}{4}(2-\alpha-2\epsilon),\frac{2-\alpha}{2},x^2\right]
\end{align}
These functions have to obey analogous normalization conditions: 
\begin{align*}
\alpha\leq1 \qquad 	E_n=\omega\hbar(2n+1-\frac{\alpha}{2})\;.
\end{align*}
In order that the Hamiltonian is well-defined on these states also requires $\alpha^2\geq1$. Therefore $\lambda$ has to be greater than zero. On the other hand, $\alpha$ is equal to $\pm\sqrt{1+8 m \lambda}$. Thus either $\alpha\geq1$ or else $\alpha\leq-1$. For $\alpha>0$ we get the solutions above, for $\alpha<0$ we obtain \eqref{eqn:secdSolut}.

\subsection{Coherent States}
\label{ssec:compl}

Recall, that the complex parametrization $z=\ee{-\imath\mathcal{L}_{\chi_C}}q$ associated to the complexifier 
\begin{align}
\label{eqn:Complro}
C=\frac{1}{m\omega}\left(\frac{1}{2}p^2+m \lambda q^{-2}\right)
\end{align}
is given by 
\begin{align}
\label{eqn:ccsan}
z=\sqrt{\frac{2}{m\omega}}\sqrt{a^2- m \lambda q^{-2}}
\end{align}
where $a=\frac{m\,\omega}{2}q-i\frac{p}{2}$. The Poisson relations for these variables are
\begin{gather*}
\{z,\bar{z}\}=-\imath\frac{4}{\omega^3 m^2}\frac{H}{z\bar{z}}~,\;\;
\{H,z\}=\imath\omega z\;\,\text{ and }\; \, \{H,\bar{z}\}=-\imath\omega \bar{z}~.
\end{gather*}
For the operators of the rescaled maps  
\begin{gather*}
L_3:=\frac{\omega}{2} H~, \:\,
L_-:=\frac{m\omega^2}{4\sqrt{2}}\, z^2
\;\,\text{ and }\;\, L_+:=\frac{m\omega^2}{4\sqrt{2}}\, \bar{z}^2
\end{gather*}
one recovers the commutation relations of $\su(1,1)$:
\begin{align}
\label{eqn:su11}
[\hat{L}_3,\hat{L}_{\pm}]=\pm \hat{L}_{\pm}\;\,\text{ and }\;\,  [\hat{L}_-,\hat{L^+}]=\hat{L}_3\;.
\end{align} 
Therefore, it is expected that the complexifier coherent states coincide with the coherent states defined by Barut and Giradello (see \cite{BarutGiradello1971}). 

The easiest way to find the associated coherent states 
\begin{align}
\label{eqn:CCS-Sy}
\psi_z(q)=\ee{-\frac{\hat{C}}{\hbar}}\delta_z(q)
\end{align}
is to express the convolution $\delta$ in terms of the eigenfunctions of $\hat{C}$ that are given in terms of  Bessel functions of the first kind $\bessel{\beta}{x}$ (see e.g. \cite{Abramowitz:mf1970}, pp. 358-364). That is, the eigenfunction to the eigenvalue $\hbar\omega c^2,\;c>0$ is 
\bq
\Phi_c(x)=\sqrt{c x}\,\bessel{\alpha/2}{cx}
\eq
where $x=\sqrt{\frac{m\omega}{\hbar}}$. These functions are not square-integrable on $\mathbb{R}^+$ as expected but define a convolution:
\begin{align}
\label{eqn:delta2}
\delta_y(x)=\int_0^{\infty}{\dif{c}\,\Phi_c(x)\Phi_c(y)}~.
\end{align}
Thus, one finds 
\begin{align}
\psi_z(x)&=\left[\int_0^{\infty}{\dif{c}\,c\ee{-\frac{c^2}{2\hbar}}\sqrt{x\,y}\,\bessel{\alpha/2}{cx}\bessel{\alpha/2}{cy}}\right]_{y\to z}\nonumber\\
\label{eqn:CCSro1}
	&=\exp{-\frac{x^2+z^2}{2}}\mbessel{\alpha/2}{xz}~.
\end{align}
Here, $\mbessel{\beta}{x}=\imath^{-\beta}\bessel{\beta}{\imath x}$ is the modified Bessel function of the first kind. These states can be also expressed in terms of the eigenfunctions of the Hamiltonian by utilizing the relation of Bessel functions and Laguerre polynomials (see \cite{Abramowitz:mf1970}, p.734). This yields:
\begin{align}
\label{eqn:CCSro2}
\psi_z(x)=\ee{-\frac{z^2}{4}}\sum_{n=0}^{\infty}(-1)^n\frac{\left(z/\sqrt{2}\right)^{2n+\frac{\alpha+1}{2}}}{\sqrt{n!\Gamma(n+\alpha/2+1)}}\Psi_n(x)~.
\end{align}
For $z\to z^2/\sqrt{2}$ and $x\mapsto x^2/2$ one recovers the coherent states defined by Barut and Giradello \cite{BarutGiradello1971} up to a constant factor as expected.  However, the states \eqref{eqn:CCSro2} are not  normalized with respect to the real measure $\dif{x}$. Instead, one finds 
\begin{align}
\|\psi_z\|_{L^2(\R^+)}=\frac{|z|}{2}\ee{-\frac{z^2+\bar{z}^2}{4}}\,\mbessel{\alpha/2}{\frac{|z|^2}{2}}~.
\end{align}

\noindent{\bf Properties:}
The states \eqref{eqn:CCSro2} are over-complete with respect to the measure 
\begin{align}
\label{eqn:CCSroru}
\dif{\mu(z)}=[\mbessel{0}{\frac{r^2}{2}}]^{-1}\frac{r^2}{4\pi}\mbesselt{\alpha/2}{\frac{r^2}{2}}\dif{r}\dif{\theta}
\end{align}
expressed in polar coordinates $z=r\ee{\imath\theta}\;r\geq 0,\phi\in[-\frac{\pi}{2},\frac{\pi}{2}]$. Here, $\mbesselt{\beta}{x}$ is a modified Bessel function of the second kind. The unusual integration domain of $\phi$ originates from the fact that the analytic continuation $q\to z$ for the radial oscillator is defined only on the upper complex plane. Moreover, the states are truly stable under the dynamics, i.e. , 
\begin{align*}
\ee{\imath\frac{\omega}{\hbar} t\hat{H}} \psi_z(x)
=\ee{\imath\omega/2}\psi_{z(t)}(x)~.
\end{align*}
Since \eqref{eqn:CCSro2} are the eigenvectors of the annihilation operator $\hat{z}$, they minimize the uncertainty of
\begin{align}
\hat{a}:=\frac{\hat{z}^2+(\hat{z}^2)^{\dagger}}{2}=-\frac{2}{\omega}\hat{C}+\hat{x}^2
\end{align}
and of
\begin{align}
\hat{b}:=\frac{\hat{z}^2-(\hat{z}^2)^{\dagger}}{2}=-\frac{1}{\omega}(\hat{x}\hat{p}+\hat{p}\hat{x})\;.
\end{align}


\section{Partial differential equations of first order }
\label{app:B}
 
Partial differential equations (PDE) emerge in all kinds of physical problems and are widely studied in mathematics. A detailed treatise would go beyond of this work. It is rather intended to give a rough overview of properties and solution techniques for quasi-linear, first order PDE mentioned in the main text. For the interested reader we recommend the very detailed book by Hilbert and Courant \cite{HilbertCourant} or the more elementary book by Cohen \cite{Cohen}.

A first order PDE is called quasi-linear if it is of the form 
\beq
\label{app:B1}
\sum_{j=1}^n a_j(\mathbf{x}, u) \frac{\partial u}{\partial x_j}=b(\mathbf{x},u)
\eeq 
where $a_j$ and $b$ are continuos differentiable functions of $\mathbf{x}:=(x_1,\cdots,x_n)$ and $u$. The equation is linear if $a_j$ and $b$ do not depend on $u$. A solution $u(\mathbf{x})$ to \eqref{app:B1} defines an $n$-dimensional surfaces, called \emph{integral surface}, whose tangential vectors $v_i=\frac{\partial u}{\partial x_i}$ satisfy \eqref{app:B1} at every point with coordinates $(\mathbf{x},u)$.
  
To solve \eqref{app:B1} it is sufficient to determine a family of $(n-1)$-parametric \emph{characteristic curves} $\mathbf{x}(s,t_1,\cdots t_{n-1}), u(s,t_1,\cdots t_{n-1})$ that obey
\bq
\frac{\dif x_j}{\dif s}=a_j(\mathbf{x},u) \quad\text{and}\quad \frac{\dif u}{\dif s}=b(\mathbf{x})
\eq
with initial values $\phi_j(t_1,\cdots t_{n-1})=x_j(0,t_1,\cdots t_{n-1})$ and $u(0,t_1,\cdots t_{n-1})=\chi_k(t_1,\cdots t_{n-1})$. For a given set of initial data on the $(n-1)$-dimensional initial manifold $C$ there exist a unique solution of \eqref{app:B1} iff the functional determinant
\bq
\Delta:=\det
\begin{vmatrix}
a_1 &\cdots& a_n\\
\frac{\partial x_1}{\partial t_1} &\cdots&\frac{\partial x_n}{\partial t_1}\\
\vdots & \ddots &\vdots\\
\frac{\partial x_1}{\partial t_{n-1}} &\cdots&\frac{\partial x_n}{\partial t_{n-1}}
\end{vmatrix}
\eq
does not vanish. If $\Delta$ is zero then solutions of the initial value problem can only exist if $C$ is a characteristic manifold, that means, that $C$ is generated by a family of $(n-2)$-parametric characteristic curves itself (see \cite{HilbertCourant} for details). In this case, there exist infinitely many solutions. 

A generic first order PDE can be always written in the form $H(\mathbf{x},\mathbf{p},u)=0$ where $p_j=\frac{\partial u}{\partial x_j}$. Suppose $H$ is twice continuously differentiable then solutions can be found by a  method similar to the above. Namely, one replaces the PDE by the system of ordinary differential equations (ODE)
\bq
\frac{\dif H}{\dif s}=\sum_{j=1}^n \frac{\partial H}{\partial x_j} \, \frac{\dif x_j}{\dif s}
+\sum_{j=1}^n \frac{\partial H}{\partial p_j} \, \frac{\dif p_j}{\dif s}
+\frac{\partial H}{\partial u}\,  \frac{\dif u}{\dif s}
\eq 
together with characteristic equations
\bq
 \frac{\dif x_j}{\dif s}=  \frac{\partial H}{\partial p_j} ~,
 \quad
 \frac{\dif u}{\dif s}=\sum_{j=1}^n \,p_j\,  \frac{\partial H}{\partial p_j} ~,
 \quad
  \frac{\dif p_j}{\dif s}=  -(\frac{\partial H}{\partial x_j}+\frac{\partial H}{\partial u}\, p_j) ~.
\eq
Note, that all these methods are local implying that existence of solutions only holds in an appropriate neighborhood of a point where the initial functions and the coefficients in \eqref{app:B1} are well-behaved. 

The general strategy behind the above is to replace the PDE by a system of ODEs which are considered easier to integrate. Yet, often, it is exactly the other way around that the PDE is easier to solve, for example by the method of separation of variables, than the system of ODEs. The idea of turning a system of ODEs into a PDE goes back to Hamilton and Jacobi and will be explained in a bit more detail in appendix \ref{app:C} in the context of classical mechanics.

\section{Hamilton-Jacobi~method and canonical~transformations}
\label{app:C}

This short summary of the Hamilton-Jacobi approach is mainly based on \arnold, more information can be found in any good mechanics book, e.g. in \cite{Goldstein}. To keep this small discourse as simple as possible, the discussion is restricted to time-independent, one-dimensional systems. Most of the formulas can be immediately generalized to models with more degrees of freedom. To also include time-dependent systems a bit more work would be required. 

\begin{definition}[Canonical transformation]
A map $g:\phasespace\to \phasespace$ is a canonical transformation iff
\bq
g^{\ast} \Omega=\Omega
\eq
where $g^{\ast}$ denotes the pull back and $\Omega$ the symplectic form on $\phasespace$.
\end{definition}
By applying Stokes theorem one can easily show that this condition is equivalent to 
\beq 
\label{appB1}
\oint_{\gamma} p\,\dif q- P\,\dif Q=0
\eeq
for any closed curve $\gamma$ in $\phasespace$. It is also well known that a canonical transformation leaves the equations of motion form-invariant, which can be proven by examining the transformation of the one-form $\Omega_1:=p\,\dif q -H\,\dif t$ on the extended phase space $\phasespace\times \R^+$. To see why, remember that any alternating two-form in odd-dimensions has at least one null-direction, i.e. it has at least one eigenvector to the eigenvalue zero. Here, this null-direction is given by the integral curve $(p(t),q(t))$ for which
\bq
0=\dif\Omega_1(p(t), q(t))=\dif p \wedge\dif  q-\frac{\partial{H}}{\partial p} \dif p\wedge \dif t-\frac{\partial{H}}{\partial q} \dif q\wedge \dif t
=(\frac{\dif q}{\dif t}-\frac{\partial{H}}{\partial p} \dif p\wedge \dif t)+(\frac{\dif q}{\dif t}+\frac{\partial{H}}{\partial q} )\dif q\wedge \dif t~.
\eq
This obviously implies the canonical equations of motion and consequently $\Omega_1$ captures the full dynamics.
\begin{theorem}[\arnold]
Suppose $g:\phasespace\to \phasespace$ is a canonical transformation that maps $(p,q)$ to $(P,Q)$ then there exist functions $K(P,Q)$ and $S(p,q)$ so that  
\begin{gather*}
p\,\dif q -H\,\dif t=P\,\dif Q -K\,\dif t +\dif S~,
\\[5pt]
\frac{\dif P}{\dif t}=-\frac{\partial K}{\partial Q} 
\quad\text{and}\quad
\frac{\dif Q}{\dif t}=\frac{\partial K}{\partial P}~.
\end{gather*}
\end{theorem}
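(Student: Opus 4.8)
The plan is to build the two objects $K$ and $S$ directly from the two facts just recorded: that $g^{\ast}\Omega=\Omega$ is equivalent to $\oint_{\gamma}(p\,\dif q-P\,\dif Q)=0$ for every closed curve $\gamma$, and that the dynamics is the null-direction of $\dif\Omega_1$ with $\Omega_1=p\,\dif q-H\,\dif t$.

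First I would produce the generating function. Since $g^{\ast}\Omega=\Omega$ gives $\dif P\wedge\dif Q=\dif p\wedge\dif q$, the one-form $\theta:=p\,\dif q-P\,\dif Q$ is closed, $\dif\theta=0$; equivalently, the vanishing of all its periods makes $S(\m):=\int_{\m_0}^{\m}\theta$ path-independent, so that $\theta=\dif S$ on a simply connected neighbourhood in $\phasespace$. Because $g$ maps $\phasespace$ to $\phasespace$ with no explicit time-dependence, $S$ depends only on the phase-space variables and $\dif S$ carries no $\dif t$-component. I would then set $K:=H$, regarded as a function of $(P,Q)$ through the inverse of $g$; substituting $p\,\dif q=P\,\dif Q+\dif S$ into $\Omega_1$ yields at once
\beq
p\,\dif q-H\,\dif t=P\,\dif Q-K\,\dif t+\dif S~,
\eeq
which is the first assertion.

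For the equations of motion I would re-run the null-direction computation already displayed for $(p,q,H)$. As $\dif S$ is exact it drops out of $\dif\Omega_1=\dif P\wedge\dif Q-\dif K\wedge\dif t$, and the integral curve of the motion is the direction annihilated by this two-form. Contracting it with the tangent vector $(\frac{\dif P}{\dif t},\frac{\dif Q}{\dif t},1)$ and expanding $\dif K=\frac{\partial K}{\partial P}\dif P+\frac{\partial K}{\partial Q}\dif Q$ reproduces, line for line as above, the canonical equations $\frac{\dif Q}{\dif t}=\frac{\partial K}{\partial P}$ and $\frac{\dif P}{\dif t}=-\frac{\partial K}{\partial Q}$. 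Since the null-direction is intrinsic and unaffected by the exact piece $\dif S$, these are the same integral curves expressed in the new variables. The only delicate point is that the Poincar\'e lemma furnishes $S$ only locally, so $S$ may be globally multivalued unless the domain is simply connected---consistent with the local character of this appendix; everything else is substitution plus the null-direction argument already carried out.
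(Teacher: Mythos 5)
Your proposal is correct and follows essentially the same route as the paper: obtain $S$ from the vanishing of the periods of $p\,\dif q-P\,\dif Q$, define $K$ as $H$ re-expressed in the new variables, and deduce the canonical equations from the fact that the exact piece $\dif S$ drops out under exterior differentiation so that the null-direction argument for $\dif\Omega_1$ applies verbatim in the $(P,Q)$ coordinates. You merely spell out the null-direction computation that the paper compresses into ``the second part follows directly from $\dif^2 S=0$.''
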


\begin{proof}
Since condition \eqref{appB1} has to hold for all closed curves $\gamma$ the one-form $p\, \dif q-P\,\dif Q$ is exact, which means, that there exist a potential $S$ with $\dif S=p \dif q-P\dif Q$. Now set $K(P(p,q),Q(p,q))=H(p,q)$. This proofs the first part of the theorem. The second part follows directly from $\dif^2 S=0$.
 \end{proof}
 The function $S$ is called a \emph{generating function} of the canonical transformation $g$. For one-dimensional models there exist two types of such functions\footnote{For more than one degree of freedom there exist a third type depending on a mixture of $Q$ and $P$-variables.}:
\begin{itemize}
\item[A.] Suppose $\det\frac{\partial (Q,q)}{\partial (p,q)}\neq0$ then the momentum can be written as a function of $Q$ and $q$ by the inverse function theorem. Inserting $p=p(Q,q)$ in $S$ leads to $S(p,q)=S_1(Q,q)$. By comparison of $\dif S$ and $\dif S_1$ one finds 
\bq
\frac{\partial S_1(Q,q)}{\partial q}=p
\quad\text{and}\quad
\frac{\partial S_1(Q,q)}{\partial Q}=-P~.
\eq

\item[B.] If $\det\frac{\partial (P,q)}{\partial (p,q)}\neq0$  then $p=p(P,q)$. The corresponding generating function $S_2(P,q)$ is obtained via a Legendre transformation, that is, $S_2(P,q)=PQ+S(p,q)$. A comparison of the differentials yields
\bq
\frac{\partial S_2(P,q)}{\partial q}=p
\quad\text{and}\quad
\frac{\partial S_2(P,q)}{\partial P}=Q~.
\eq
\end{itemize}
A given function $S_1$/$S_2$ generates a canonical transformation iff $\frac{\partial^2 S_1}{\partial q\partial Q}\neq 0 $ / $\frac{\partial^2 S_2}{\partial q\partial P}\neq 0$. This non-degeneracy condition is needed to ensure that $Q$ / $P$ can be extracted as functions of $p$ and $q$. In more dimensions it must be replaced by $\det\frac{\partial^2 S_1}{ \partial (q,Q)}\neq 0$ or $\det\frac{\partial^2 S_2}{ \partial (q,P)}\neq 0$ respectively. 

The simplest example of a generating function is $S_2(P,q)=Pq$ that gives rise to the identity transformation. Another application is the Hamilton-Jacobi method. The main idea, hereby, is to transform the system such that the dynamics is especially simple. This is, of course, always the case if some coordinates are cyclic. Thus, one tries to find functions $S_1$ or $S_2$ so that  
\bq
H(\frac{\partial S}{\partial q}, q)=K(P,t)~.
\eq
Then $P$ is obviously constant and $Q(t)=\int_0^t \frac{\partial K}{\partial P}$ (for specific examples see section \ref{sec:examples}). 

This is closely related to the action-angle coordinates that can be introduced for models with compact level sets $M_{h}=\{(p,q)| H(p,q)=h\}$. Here, a function $S_2(I,q)$ generating the transformation $(p,q)\mapsto(I,\phi)$ is constructed that obeys
\bq
\frac{\partial S_2(I,q)}{\partial q}=p~,
\quad
\frac{\partial S_2(I,q)}{\partial I}=\phi
\quad\text{and}\quad
H(\frac{\partial S}{q}, q)=h(I)~.
\eq
For models with one degree of freedom $M_{h}$ being compact is equivalent with $M_{h}$ being a closed curve in $\phasespace$ that should be parametrized by $\phi$. This leads to the additional requirements 
\bq
I=I(h)\quad\text{and}\quad 
\oint_{M_h} \dif \phi=2\pi~.
\eq
As shown in \arnold, the function 
\bq
S_2(I,q)=\int_{\gamma_I(q_0,q)} p\,\dif q~,
\eq
where $\gamma_I(q_0,q)$ is a curve in $M_{h(I)}$ joining $q_0$ and $q$, meets all these requirements. 

%
%


\begin{thebibliography}{10}%
\makeatletter
\providecommand \@ifxundefined [1]{%
 \ifx #1\undefined \expandafter \@firstoftwo
 \else \expandafter \@secondoftwo
\fi
}%
\providecommand \@ifnum [1]{%
 \ifnum #1\expandafter \@firstoftwo
 \else \expandafter \@secondoftwo
\fi
}%
\providecommand \enquote [1]{``#1''}%
\providecommand \bibnamefont  [1]{#1}%
\providecommand \bibfnamefont [1]{#1}%
\providecommand \citenamefont [1]{#1}%
\providecommand\href[0]{\@sanitize\@href}%
\providecommand\@href[1]{\endgroup\@@startlink{#1}\endgroup\@@href}%
\providecommand\@@href[1]{#1\@@endlink}%
\providecommand \@sanitize [0]{\begingroup\catcode`\&12\catcode`\#12\relax}%
\@ifxundefined \pdfoutput {\@firstoftwo}{%
 \@ifnum{\z@=\pdfoutput}{\@firstoftwo}{\@secondoftwo}%
}{%
 \providecommand\@@startlink[1]{\leavevmode}%
 \providecommand\@@endlink[0]{}%
}{%
 \providecommand\@@startlink[1]{%
  \leavevmode
  \pdfstartlink
   attr{/Border[0 0 1 ]/H/I/C[0 1 1]}%
   user{/Subtype/Link/A<</Type/Action/S/URI/URI(#1)>>}%
  \relax
 }%
 \providecommand\@@endlink[0]{\pdfendlink}%
}%
\providecommand \url  [0]{\begingroup\@sanitize \@url }%
\providecommand \@url [1]{\endgroup\@href {#1}{\urlprefix}}%
\providecommand \urlprefix [0]{URL }%
\providecommand \Eprint[0]{\href }%
\@ifxundefined \urlstyle {%
  \providecommand \doi [1]{doi:\discretionary{}{}{}#1}%
}{%
  \providecommand \doi [0]{doi:\discretionary{}{}{}\begingroup
  \urlstyle{rm}\Url }%
}%
\providecommand \doibase [0]{http://dx.doi.org/}%
\providecommand \Doi[1]{\href{\doibase#1}}%
\providecommand \bibAnnote [3]{%
  \BibitemShut{#1}%
  \begin{quotation}\noindent
    \textsc{Key:}\ #2\\\textsc{Annotation:}\ #3%
  \end{quotation}%
}%
\providecommand \bibAnnoteFile [2]{%
  \IfFileExists{#2}{\bibAnnote {#1} {#2} {\input{#2}}}{}%
}%
\providecommand \typeout [0]{\immediate \write \m@ne }%
\providecommand \selectlanguage [0]{\@gobble}%
\providecommand \bibinfo [0]{\@secondoftwo}%
\providecommand \bibfield [0]{\@secondoftwo}%
\providecommand \translation [1]{[#1]}%
\providecommand \BibitemOpen[0]{}%
\providecommand \bibitemStop [0]{}%
\providecommand \bibitemNoStop [0]{.\EOS\space}%
\providecommand \EOS [0]{\spacefactor3000\relax}%
\providecommand \BibitemShut [1]{\csname bibitem#1\endcsname}%
\bibitem{Glauber1963_1}%
  \BibitemOpen
  \bibfield{author}{%
  \bibinfo {author} {\bibfnamefont{R.~J.}\ \bibnamefont{Glauber}},\ }%
  \bibfield{journal}{%
  \bibinfo {journal} {Phys. Rev.}\ }%
  \textbf{\bibinfo {volume} {130}},\ \bibinfo {pages} {1963} (\bibinfo {year}
  {2529-2539})%
  \bibAnnoteFile{NoStop}{Glauber1963_1}%
\bibitem{Glauber1963_2}%
  \BibitemOpen
  \bibfield{author}{%
  \bibinfo {author} {\bibfnamefont{R.~J.}\ \bibnamefont{Glauber}},\ }%
  \bibfield{journal}{%
  \bibinfo {journal} {Phys. Rev.}\ }%
  \textbf{\bibinfo {volume} {131}},\ \bibinfo {pages} {1963} (\bibinfo {year}
  {2766-2788})%
  \bibAnnoteFile{NoStop}{Glauber1963_2}%
\bibitem{Woodhouse:gq1991}%
  \BibitemOpen
  \bibfield{author}{%
  \bibinfo {author} {\bibfnamefont{N.}~\bibnamefont{Woodhouse}},\ }%
  \emph{\bibinfo {title} {{Geometric quantization}}}\ (\bibinfo {publisher}
  {Clarendon Press},\ \bibinfo {address} {Oxford},\ \bibinfo {year} {1991})%
  \bibAnnoteFile{NoStop}{Woodhouse:gq1991}%
\bibitem{BargmannIntTrafo}%
  \BibitemOpen
  \bibfield{author}{%
  \bibinfo {author} {\bibfnamefont{V.}~\bibnamefont{Bargmann}},\ }%
  \bibfield{journal}{%
  \bibinfo {journal} {Comm. Pure and Appl. Math.}\ }%
  \textbf{\bibinfo {volume} {14}},\ \bibinfo {pages} {187} (\bibinfo {year}
  {1961})%
  \bibAnnoteFile{NoStop}{BargmannIntTrafo}%
\bibitem{Segal:mp1960}%
  \BibitemOpen
  \bibfield{author}{%
  \bibinfo {author} {\bibfnamefont{I.}~\bibnamefont{Segal}},\ }%
  in\ \emph{\bibinfo {booktitle} {Proceedings of the summer conference,
  Boulder, Colorado}},\ \bibinfo {editor} {edited by\ \bibinfo {editor}
  {\bibfnamefont{M.}~\bibnamefont{Kac}}}\ (\bibinfo {year} {1960})%
  \bibAnnoteFile{NoStop}{Segal:mp1960}%
\bibitem{Hall:ha2000}%
  \BibitemOpen
  \bibfield{author}{%
  \bibinfo {author} {\bibfnamefont{B.~C.}\ \bibnamefont{Hall}},\ }%
  \bibfield{journal}{%
  \bibinfo {journal} {e-print}}%
   (\bibinfo {year} {2000}),\
  \Eprint{http://arxiv.org/abs/quant-ph/0006037v1}{arXiv:quant-ph/0006037v1}%
  \bibAnnoteFile{NoStop}{Hall:ha2000}%
\bibitem{PerelomovGCS}%
  \BibitemOpen
  \bibfield{author}{%
  \bibinfo {author} {\bibfnamefont{A.}~\bibnamefont{Perelomov}},\ }%
  \emph{\bibinfo {title} {{Generalized coherent states and their
  applications}}}\ (\bibinfo {publisher} {Springer},\ \bibinfo {address}
  {Berlin},\ \bibinfo {year} {1986})%
  \bibAnnoteFile{NoStop}{PerelomovGCS}%
\bibitem{Hall:sb1994}%
  \BibitemOpen
  \bibfield{author}{%
  \bibinfo {author} {\bibfnamefont{B.~C.}\ \bibnamefont{Hall}},\ }%
  \bibfield{journal}{%
  \bibinfo {journal} {J.Func.Ana.}\ }%
  \textbf{\bibinfo {volume} {122}},\ \bibinfo {pages} {103} (\bibinfo {year}
  {1994})%
  \bibAnnoteFile{NoStop}{Hall:sb1994}%
\bibitem{KlauderSkag}%
  \BibitemOpen
  \bibfield{author}{%
  \bibinfo {author} {\bibfnamefont{J.~R.}\ \bibnamefont{Klauder}}\ and\
  \bibinfo {author} {\bibfnamefont{B.-S.}\ \bibnamefont{Skagerstam}},\ }%
  \emph{\bibinfo {title} {{Coherent states}}}\ (\bibinfo {publisher} {World
  Scientific},\ \bibinfo {year} {1985})%
  \bibAnnoteFile{NoStop}{KlauderSkag}%
\bibitem{Klauder:2001ra}%
  \BibitemOpen
  \bibfield{author}{%
  \bibinfo {author} {\bibfnamefont{J.~R.}\ \bibnamefont{Klauder}},\ }%
  \bibfield{journal}{%
  \bibinfo {journal} {e-print}}%
   (\bibinfo {year} {2001}),\
  \Eprint{http://arxiv.org/abs/quant-ph/0110108}{arXiv:quant-ph/0110108}%
  \bibAnnoteFile{NoStop}{Klauder:2001ra}%
\bibitem{Thiemann:2002vj}%
  \BibitemOpen
  \bibfield{author}{%
  \bibinfo {author} {\bibfnamefont{T.}~\bibnamefont{Thiemann}},\ }%
  \bibfield{journal}{%
  \Doi{10.1088/0264-9381/23/6/013}{\bibinfo {journal} {Class. Quant. Grav.}}\
  }%
  \textbf{\bibinfo {volume} {23}},\ \bibinfo {pages} {2063} (\bibinfo {year}
  {2006}),\ \Eprint{http://arxiv.org/abs/gr-qc/0206037}{arXiv:gr-qc/0206037}%
  \bibAnnoteFile{NoStop}{Thiemann:2002vj}%
\bibitem{Sahlmann:2001nv}%
  \BibitemOpen
  \bibfield{author}{%
  \bibinfo {author} {\bibfnamefont{H.}~\bibnamefont{Sahlmann}}, \bibinfo
  {author} {\bibfnamefont{T.}~\bibnamefont{Thiemann}},\ and\ \bibinfo {author}
  {\bibfnamefont{O.}~\bibnamefont{Winkler}},\ }%
  \bibfield{journal}{%
  \Doi{10.1016/S0550-3213(01)00226-7}{\bibinfo {journal} {Nucl.Phys.}}\ }%
  \textbf{\bibinfo {volume} {B606}},\ \bibinfo {pages} {401} (\bibinfo {year}
  {2001}),\ \Eprint{http://arxiv.org/abs/gr-qc/0102038}{arXiv:gr-qc/0102038
  [gr-qc]}%
  \bibAnnoteFile{NoStop}{Sahlmann:2001nv}%
\bibitem{Thiemann:2000bw}%
  \BibitemOpen
  \bibfield{author}{%
  \bibinfo {author} {\bibfnamefont{T.}~\bibnamefont{Thiemann}},\ }%
  \bibfield{journal}{%
  \Doi{10.1088/0264-9381/18/11/304}{\bibinfo {journal} {Class.Quant.Grav.}}\ }%
  \textbf{\bibinfo {volume} {18}},\ \bibinfo {pages} {2025} (\bibinfo {year}
  {2001}),\ \Eprint{http://arxiv.org/abs/hep-th/0005233}{arXiv:hep-th/0005233
  [hep-th]}%
  \bibAnnoteFile{NoStop}{Thiemann:2000bw}%
\bibitem{Thiemann:2000ca}%
  \BibitemOpen
  \bibfield{author}{%
  \bibinfo {author} {\bibfnamefont{T.}~\bibnamefont{Thiemann}}\ and\ \bibinfo
  {author} {\bibfnamefont{O.}~\bibnamefont{Winkler}},\ }%
  \bibfield{journal}{%
  \Doi{10.1088/0264-9381/18/14/301}{\bibinfo {journal} {Class.Quant.Grav.}}\ }%
  \textbf{\bibinfo {volume} {18}},\ \bibinfo {pages} {2561} (\bibinfo {year}
  {2001}),\ \Eprint{http://arxiv.org/abs/hep-th/0005237}{arXiv:hep-th/0005237
  [hep-th]}%
  \bibAnnoteFile{NoStop}{Thiemann:2000ca}%
\bibitem{Thiemann:2000bx}%
  \BibitemOpen
  \bibfield{author}{%
  \bibinfo {author} {\bibfnamefont{T.}~\bibnamefont{Thiemann}}\ and\ \bibinfo
  {author} {\bibfnamefont{O.}~\bibnamefont{Winkler}},\ }%
  \bibfield{journal}{%
  \Doi{10.1088/0264-9381/18/21/315}{\bibinfo {journal} {Class.Quant.Grav.}}\ }%
  \textbf{\bibinfo {volume} {18}},\ \bibinfo {pages} {4629} (\bibinfo {year}
  {2001}),\ \Eprint{http://arxiv.org/abs/hep-th/0005234}{arXiv:hep-th/0005234
  [hep-th]}%
  \bibAnnoteFile{NoStop}{Thiemann:2000bx}%
\bibitem{ThiemannBook}%
  \BibitemOpen
  \bibfield{author}{%
  \bibinfo {author} {\bibfnamefont{T.}~\bibnamefont{Thiemann}},\ }%
  \emph{\bibinfo {title} {{Modern canonical quantum general relavity}}}\
  (\bibinfo {publisher} {Cambridge Univ. Press},\ \bibinfo {address}
  {Cambridge, UK},\ \bibinfo {year} {2007})%
  \bibAnnoteFile{NoStop}{ThiemannBook}%
\bibitem{Baez:1999sr}%
  \BibitemOpen
  \bibfield{author}{%
  \bibinfo {author} {\bibfnamefont{J.~C.}\ \bibnamefont{Baez}},\ }%
  \bibfield{journal}{%
  \bibinfo {journal} {Lect.Notes Phys.}\ }%
  \textbf{\bibinfo {volume} {543}},\ \bibinfo {pages} {25} (\bibinfo {year}
  {2000}),\ \Eprint{http://arxiv.org/abs/gr-qc/9905087}{arXiv:gr-qc/9905087
  [gr-qc]}%
  \bibAnnoteFile{NoStop}{Baez:1999sr}%
\bibitem{Perez:2012wv}%
  \BibitemOpen
  \bibfield{author}{%
  \bibinfo {author} {\bibfnamefont{A.}~\bibnamefont{Perez}},\ }%
  \bibfield{journal}{%
  \bibinfo {journal} {Living Rev.Rel.}\ }%
  \textbf{\bibinfo {volume} {16}},\ \bibinfo {pages} {3} (\bibinfo {year}
  {2013}),\ \Eprint{http://arxiv.org/abs/1205.2019}{arXiv:1205.2019 [gr-qc]}%
  \bibAnnoteFile{NoStop}{Perez:2012wv}%
\bibitem{Perez:2003vx}%
  \BibitemOpen
  \bibfield{author}{%
  \bibinfo {author} {\bibfnamefont{A.}~\bibnamefont{Perez}},\ }%
  \bibfield{journal}{%
  \bibinfo {journal} {Class.Quant.Grav.}\ }%
  \textbf{\bibinfo {volume} {20}},\ \bibinfo {pages} {R43} (\bibinfo {year}
  {2003}),\ \Eprint{http://arxiv.org/abs/gr-qc/0301113}{arXiv:gr-qc/0301113
  [gr-qc]}%
  \bibAnnoteFile{NoStop}{Perez:2003vx}%
\bibitem{Bohr1920}%
  \BibitemOpen
  \bibfield{author}{%
  \bibinfo {author} {\bibfnamefont{N.}~\bibnamefont{Bohr}},\ }%
  \bibfield{journal}{%
  \bibinfo {journal} {Zeitschrift f{\"u}r {P}ysik A}\ }%
  \textbf{\bibinfo {volume} {2}},\ \bibinfo {pages} {423} (\bibinfo {year}
  {1920})%
  \bibAnnoteFile{NoStop}{Bohr1920}%
\bibitem{Schroedinger1926}%
  \BibitemOpen
  \bibfield{author}{%
  \bibinfo {author} {\bibfnamefont{E.}~\bibnamefont{Schr{\"o}dinger}},\ }%
  \bibfield{journal}{%
  \bibinfo {journal} {Naturwissenschaften}\ }%
  \textbf{\bibinfo {volume} {14}},\ \bibinfo {pages} {664} (\bibinfo {year}
  {1926})%
  \bibAnnoteFile{NoStop}{Schroedinger1926}%
\bibitem{Kano:rt1976}%
  \BibitemOpen
  \bibfield{author}{%
  \bibinfo {author} {\bibfnamefont{Y.}~\bibnamefont{Kano}},\ }%
  \bibfield{journal}{%
  \bibinfo {journal} {Phys. Lett.}\ }%
  \textbf{\bibinfo {volume} {56A}},\ \bibinfo {pages} {7} (\bibinfo {year}
  {1976})%
  \bibAnnoteFile{NoStop}{Kano:rt1976}%
\bibitem{Letz:eo1977}%
  \BibitemOpen
  \bibfield{author}{%
  \bibinfo {author} {\bibfnamefont{H.}~\bibnamefont{Letz}},\ }%
  \bibfield{journal}{%
  \bibinfo {journal} {Phys. Lett.}\ }%
  \textbf{\bibinfo {volume} {60A}},\ \bibinfo {pages} {399} (\bibinfo {year}
  {1977})%
  \bibAnnoteFile{NoStop}{Letz:eo1977}%
\bibitem{Mehta:te1966}%
  \BibitemOpen
  \bibfield{author}{%
  \bibinfo {author} {\bibfnamefont{C.~L.}\ \bibnamefont{Mehta}}\ and\ \bibinfo
  {author} {\bibfnamefont{E.~C.~G.}\ \bibnamefont{Sudarshan}},\ }%
  \bibfield{journal}{%
  \bibinfo {journal} {Phys. Lett.}\ }%
  \textbf{\bibinfo {volume} {22}},\ \bibinfo {pages} {574} (\bibinfo {year}
  {1966})%
  \bibAnnoteFile{NoStop}{Mehta:te1966}%
\bibitem{Chand:nc1978}%
  \BibitemOpen
  \bibfield{author}{%
  \bibinfo {author} {\bibfnamefont{P.}~\bibnamefont{Chand}},\ }%
  \bibfield{journal}{%
  \bibinfo {journal} {Phys. Lett.}\ }%
  \textbf{\bibinfo {volume} {67A}},\ \bibinfo {pages} {99} (\bibinfo {year}
  {1978})%
  \bibAnnoteFile{NoStop}{Chand:nc1978}%
\bibitem{arnold}%
  \BibitemOpen
  \bibfield{author}{%
  \bibinfo {author} {\bibfnamefont{V.}~\bibnamefont{Arnold}},\ }%
  \emph{\bibinfo {title} {Mathematical Methods of Classical Mechanics}},\
  \bibinfo {edition} {2nd}\ ed.\ (\bibinfo {publisher} {Springer},\ \bibinfo
  {address} {New York},\ \bibinfo {year} {1989})%
  \bibAnnoteFile{NoStop}{arnold}%
\bibitem{Goldstein}%
  \BibitemOpen
  \bibfield{author}{%
  \bibinfo {author} {\bibfnamefont{H.}~\bibnamefont{Goldstein}}, \bibinfo
  {author} {\bibfnamefont{C.}~\bibnamefont{Poole}},\ and\ \bibinfo {author}
  {\bibfnamefont{J.}~\bibnamefont{Safko}},\ }%
  \emph{\bibinfo {title} {Classical Mechanics}},\ \bibinfo {edition} {3rd}\
  ed.\ (\bibinfo {publisher} {Addison Wesley},\ \bibinfo {address} {San
  Francisco},\ \bibinfo {year} {2002})%
  \bibAnnoteFile{NoStop}{Goldstein}%
\bibitem{HilbertCourant}%
  \BibitemOpen
  \bibfield{author}{%
  \bibinfo {author} {\bibfnamefont{D.}~\bibnamefont{Hilbert}}\ and\ \bibinfo
  {author} {\bibfnamefont{R.}~\bibnamefont{Courant}},\ }%
  \emph{\bibinfo {title} {Methoden der Mathematischen Physik II}},\ \bibinfo
  {edition} {2nd}\ ed.\ (\bibinfo {publisher} {Springer},\ \bibinfo {address}
  {Berlin},\ \bibinfo {year} {1968})%
  \bibAnnoteFile{NoStop}{HilbertCourant}%
\bibitem{Ashtekar:2008zu}%
  \BibitemOpen
  \bibfield{author}{%
  \bibinfo {author} {\bibfnamefont{A.}~\bibnamefont{Ashtekar}},\ }%
  \bibfield{journal}{%
  \Doi{10.1007/s10714-009-0763-4}{\bibinfo {journal} {Gen. Rel. Grav.}}\ }%
  \textbf{\bibinfo {volume} {41}},\ \bibinfo {pages} {707} (\bibinfo {year}
  {2009}),\ \Eprint{http://arxiv.org/abs/0812.0177}{arXiv:0812.0177 [gr-qc]}%
  \bibAnnoteFile{NoStop}{Ashtekar:2008zu}%
\bibitem{Bojowald:2011ku}%
  \BibitemOpen
  \bibfield{author}{%
  \bibinfo {author} {\bibfnamefont{M.}~\bibnamefont{Bojowald}},\ }%
  in\ \emph{\bibinfo {booktitle} {{Proceedings of Foundations of Space and
  Time: Reflections on Quantum Gravity}}},\ \bibinfo {editor} {edited by\
  \bibinfo {editor} {\bibfnamefont{J.}~\bibnamefont{Murugan}}}\ (\bibinfo
  {year} {2011})\ pp.\ \bibinfo {pages} {211--256},\
  \Eprint{http://arxiv.org/abs/1101.5592}{arXiv:1101.5592 [gr-qc]}%
  \bibAnnoteFile{NoStop}{Bojowald:2011ku}%
\bibitem{Ashtekar:2003hd}%
  \BibitemOpen
  \bibfield{author}{%
  \bibinfo {author} {\bibfnamefont{A.}~\bibnamefont{Ashtekar}}, \bibinfo
  {author} {\bibfnamefont{M.}~\bibnamefont{Bojowald}},\ and\ \bibinfo {author}
  {\bibfnamefont{J.}~\bibnamefont{Lewandowski}},\ }%
  \bibfield{journal}{%
  \bibinfo {journal} {Adv.Theor.Math.Phys.}\ }%
  \textbf{\bibinfo {volume} {7}},\ \bibinfo {pages} {233} (\bibinfo {year}
  {2003}),\ \Eprint{http://arxiv.org/abs/gr-qc/0304074}{arXiv:gr-qc/0304074
  [gr-qc]}%
  \bibAnnoteFile{NoStop}{Ashtekar:2003hd}%
\bibitem{Ashtekar:2011ni}%
  \BibitemOpen
  \bibfield{author}{%
  \bibinfo {author} {\bibfnamefont{A.}~\bibnamefont{Ashtekar}}\ and\ \bibinfo
  {author} {\bibfnamefont{P.}~\bibnamefont{Singh}},\ }%
  \bibfield{journal}{%
  \Doi{10.1088/0264-9381/28/21/213001}{\bibinfo {journal} {Class.Quant.Grav.}}\
  }%
  \textbf{\bibinfo {volume} {28}},\ \bibinfo {pages} {213001} (\bibinfo {year}
  {2011}),\ \Eprint{http://arxiv.org/abs/1108.0893}{arXiv:1108.0893 [gr-qc]}%
  \bibAnnoteFile{NoStop}{Ashtekar:2011ni}%
\bibitem{Poeschel}%
  \BibitemOpen
  \bibfield{author}{%
  \bibinfo {author} {\bibfnamefont{G.}~\bibnamefont{P\"oschel}}\ and\ \bibinfo
  {author} {\bibfnamefont{E.}~\bibnamefont{Teller}},\ }%
  \bibfield{journal}{%
  \bibinfo {journal} {Z. Phys.}\ }%
  \textbf{\bibinfo {volume} {83}},\ \bibinfo {pages} {143} (\bibinfo {year}
  {1933})%
  \bibAnnoteFile{NoStop}{Poeschel}%
\bibitem{Abramowitz:mf1970}%
  \BibitemOpen
  \bibfield{author}{%
  \bibinfo {author} {\bibfnamefont{M.}~\bibnamefont{Abramowitz}}\ and\ \bibinfo
  {author} {\bibfnamefont{I.}~\bibnamefont{Stegun}},\ }%
  \emph{\bibinfo {title} {{Handbook of mathematical functions}}}\ (\bibinfo
  {publisher} {Dover Publications},\ \bibinfo {address} {New York},\ \bibinfo
  {year} {1970})%
  \bibAnnoteFile{NoStop}{Abramowitz:mf1970}%
\bibitem{groebener}%
  \BibitemOpen
  \bibfield{author}{%
  \bibinfo {author} {\bibfnamefont{W.}~\bibnamefont{Gr\"obener}}\ and\ \bibinfo
  {author} {\bibfnamefont{N.}~\bibnamefont{Hofreiter}},\ }%
  \emph{\bibinfo {title} {Integraltafel - Unbestimmte Integrale}},\ \bibinfo
  {edition} {4th}\ ed.,\ Vol.~\bibinfo {volume} {1}\ (\bibinfo {publisher}
  {sp},\ \bibinfo {address} {Wien},\ \bibinfo {year} {1965})%
  \bibAnnoteFile{NoStop}{groebener}%
\bibitem{HallKirwin1}%
  \BibitemOpen
  \bibfield{author}{%
  \bibinfo {author} {\bibfnamefont{B.~C.}\ \bibnamefont{{Hall}}}\ and\ \bibinfo
  {author} {\bibfnamefont{W.~D.}\ \bibnamefont{{Kirwin}}},\ }%
  \bibfield{journal}{%
  \bibinfo {journal} {e-print}}%
   (\bibinfo {year} {2008}),\
  \Eprint{http://arxiv.org/abs/0811.3083}{arXiv:0811.3083}%
  \bibAnnoteFile{NoStop}{HallKirwin1}%
\bibitem{Kirwin1}%
  \BibitemOpen
  \bibfield{author}{%
  \bibinfo {author} {\bibfnamefont{W.~D.}\ \bibnamefont{{Kirwin}}}, \bibinfo
  {author} {\bibfnamefont{J.~M.}\ \bibnamefont{{Mour{\~a}o}}},\ and\ \bibinfo
  {author} {\bibfnamefont{J.~P.}\ \bibnamefont{{Nunes}}},\ }%
  \bibfield{journal}{%
  \bibinfo {journal} {e-print}}%
   (\bibinfo {year} {2012}),\
  \Eprint{http://arxiv.org/abs/1203.4767}{arXiv:1203.4767}%
  \bibAnnoteFile{NoStop}{Kirwin1}%
\bibitem{Kirwin2}%
  \BibitemOpen
  \bibfield{author}{%
  \bibinfo {author} {\bibfnamefont{W.~D.}\ \bibnamefont{{Kirwin}}}, \bibinfo
  {author} {\bibfnamefont{J.~M.}\ \bibnamefont{{Mour{\~a}o}}},\ and\ \bibinfo
  {author} {\bibfnamefont{J.~P.}\ \bibnamefont{{Nunes}}},\ }%
  \bibfield{journal}{%
  \bibinfo {journal} {e-print}}%
   (\bibinfo {year} {2012}),\
  \Eprint{http://arxiv.org/abs/1211.2145}{arXiv:1211.2145}%
  \bibAnnoteFile{NoStop}{Kirwin2}%
\bibitem{HallKirwin2}%
  \BibitemOpen
  \bibfield{author}{%
  \bibinfo {author} {\bibfnamefont{B.~C.}\ \bibnamefont{{Hall}}}\ and\ \bibinfo
  {author} {\bibfnamefont{W.~D.}\ \bibnamefont{{Kirwin}}},\ }%
  \bibfield{journal}{%
  \bibinfo {journal} {e-print}}%
   (\bibinfo {year} {2008}),\
  \Eprint{http://arxiv.org/abs/0811.3083}{arXiv:0811.3083}%
  \bibAnnoteFile{NoStop}{HallKirwin2}%
\bibitem{BarutGiradello1971}%
  \BibitemOpen
  \bibfield{author}{%
  \bibinfo {author} {\bibfnamefont{A.~O.}\ \bibnamefont{Barut}}\ and\ \bibinfo
  {author} {\bibfnamefont{L.}~\bibnamefont{Giradello}},\ }%
  \bibfield{journal}{%
  \bibinfo {journal} {Commun. math. Phys.}\ }%
  \textbf{\bibinfo {volume} {21}},\ \bibinfo {pages} {41} (\bibinfo {year}
  {1971})%
  \bibAnnoteFile{NoStop}{BarutGiradello1971}%
\bibitem{Cohen}%
  \BibitemOpen
  \bibfield{author}{%
  \bibinfo {author} {\bibfnamefont{A.}~\bibnamefont{Cohen}},\ }%
  \emph{\bibinfo {title} {An elementary treatise on differential equations}}\
  (\bibinfo {publisher} {D.C. Health \& co},\ \bibinfo {address} {Boston},\
  \bibinfo {year} {1906})%
  \bibAnnoteFile{NoStop}{Cohen}%
\end{thebibliography}
\end{document}